\documentclass[a4paper,USenglish,cleveref, autoref, thm-restate]{lipics-v2021}



\bibliographystyle{plainurl}

\title{Promptness and fairness in Muller LTL formulas} 




\author{Damien Busatto-Gaston}{Univ Paris Est Creteil, LACL, F-94010 Creteil, France}{damien.busatto-gaston@u-pec.fr}{https://orcid.org/0000-0002-7266-0927}{}
\author{Youssouf Oualhadj}{Univ Paris Est Creteil, LACL, F-94010 Creteil, France\\ CNRS, ReLaX, IRL 2000, Siruseri, India}{youssouf.oualhadj@u-pec.fr}{https://orcid.org/0000-0003-0200-4032}{}
\author{Léo Tible}{Univ Paris Est Creteil, LACL, F-94010 Creteil, France}{leo.tible@u-pec.fr}{}{}
\author{Daniele Varacca}{Univ Paris Est Creteil, LACL, F-94010 Creteil, France}{daniele.varacca@u-pec.fr}{https://orcid.org/0009-0007-6500-2153}{}

\authorrunning{D. Busatto-Gaston, Y. Oualhadj, L. Tible and D. Varacca} 

\Copyright{Damien Busatto-Gaston, Youssouf Oualhadj, Léo Tible and Daniele Varacca} 

\ccsdesc[500]{Theory of computation~Logic and verification}

\keywords{Model checking, Fairness, Temporal logics} 

\category{} 

\relatedversion{} 




\nolinenumbers 

\EventEditors{John Q. Open and Joan R. Access}
\EventNoEds{2}
\EventLongTitle{42nd Conference on Very Important Topics (CVIT 2016)}
\EventShortTitle{CVIT 2016}
\EventAcronym{CVIT}
\EventYear{2016}
\EventDate{December 24--27, 2016}
\EventLocation{Little Whinging, United Kingdom}
\EventLogo{}
\SeriesVolume{42}
\ArticleNo{23}

\usepackage[utf8]{inputenc}
\usepackage[T1]{fontenc}

\usepackage{amssymb} 
\usepackage[tbtags,fleqn]{amsmath} 
\usepackage[mathscr]{eucal} 

\usepackage{enumerate}


\usepackage{hyperref}

\usepackage{amsthm}
\usepackage{thmtools}
\usepackage{thm-restate} 

\usepackage{subcaption} 
\usepackage{graphicx} 

\usepackage{array}
\usepackage{multirow}
\usepackage{tabularx}
\usepackage[online]{threeparttable}

\usepackage{listings}
\usepackage{comment} 
\usepackage{xstring} 

\usepackage{xspace} 
\usepackage{xparse} 

\usepackage{amsfonts} 
\usepackage{stmaryrd} 

\usepackage{hhline} 
\usepackage[lined,boxed,commentsnumbered,noend]{algorithm2e}
\usepackage{float}



\newcommand\newmath[2]{\newcommand#1{\ensuremath{#2}\xspace}}

\newcommand\newmathope[2]{\newcommand#1{\ensuremath{\operatorname{#2}}\xspace}}








\newmath{\prob}{\bbP}

\newmath{\zug[1]}{\left\langle{#1}\right\rangle} 
\newmath{\mset[1]}{\{{#1}\}}


\newmath{\ap}{\mathsf{AP}}
\newmath{\states}{\mathsf{S}}
\newmath{\tr}{\mathsf{T}}
\newmath{\sinit}{s_{\mathsf{init}}}

\newmath{\lts}{\mathcal{S}}
\newmath{\ltsFull}{\lts = \zug{\states,\allowbreak \sinit,\allowbreak \tr,\allowbreak \lblFull}}

\newmath{\subform}{\mathsf{SubF}}

\newmath{\paths}{\mathsf{FPaths}}
\newmath{\runs}{\mathsf{Runs}}
\newmath{\runsinit}{\mathsf{Runs_{init}}}
\newmath{\cyl}{\mathsf{Cyl}}

\newmath{\attractor}{\mathsf{k}\text{-}\mathsf{Attractor}}
\newmath{\attractork[1]}{\mathsf{#1}\text{-}\mathsf{Attractor}}
\newmath{\infinity}{\mathsf{Inf}}

\newmath{\successor}{\mathsf{Succ}}
\newmath{\predecessor}{\mathsf{Pred}}
\newmath{\pre}{\mathsf{SPred}}
\newmath{\witness}{\mathsf{W}}


\newmath{\sem[1]}{\left\llbracket{#1}\right\rrbracket} 


\newmathope{\untl}{\mathsf{U}}
\newmathope{\rels}{\mathsf{R}}
\newmathope{\nxt}{\mathsf{X}}
\newmathope{\f}{\mathsf{F}}
\newmath{\p}{\mathsf{P}}
\newmathope{\fp}{\f_\p}
\newmathope{\g}{\mathsf{G}}
\newmathope{\finf}{\f^\infty}
\newmathope{\fpinf}{\f_\p^\infty}

\newmath{\fLTL}{\calL(\f^\infty)}
\newmath{\fpLTL}{\calL{(\f_{\p}^\infty)}}
\newmath{\TfpLTL}{\f\big(\fpLTLplus\big)}
\newmath{\fpLTLplus}{\calL^+(\f_{\p}^\infty)}

\newmath{\asMod}{\models_{\text{\tiny{\sf AS}}}}
\newmath{\notAsMod}{\not\models_{\text{\tiny{\sf AS}}}}






\usepackage{tikz}
\usetikzlibrary{arrows,shapes,automata,positioning}
\tikzset{>=latex,initial text=}


\tikzstyle{player0}=[state,draw,rounded rectangle,align=center,minimum size=6mm]
\tikzstyle{player1}=[state,draw,rectangle,align=center,minimum size=6mm]
\tikzstyle{player2}=[state,draw,rounded rectangle,align=center,minimum size=6mm]

\tikzstyle{action0}=[->]

\tikzset{every loop/.style={looseness=7}} 

\usetikzlibrary{decorations.pathreplacing}



\newcommand{\calL}{\mathcal{L}}

\newcommand{\bbP}{\mathbb{P}}
\newcommand{\bbN}{\mathbb{N}}


\def\set#1{\{#1\}}

\newmath{\lbl}{\mathsf{lbl}}
\newmath{\lblFull}{\lbl \colon \states \to 2^{\ap}}
\newcommand{\induction}{\mathsf{sat}}

\newtheorem{problem}[theorem]{Problem}

 \newcommand{\conp}{\mathsf{coNP}}
 \newcommand{\pspace}{\mathsf{PSPACE}}
 \newcommand{\np}{\mathsf{NP}}
 \newcommand{\poly}{\mathsf{PTIME}}
 
 \newcommand{\sub}[2]{[#1..#2]}

\renewcommand\paragraph[1]{\smallskip\noindent\textbf{#1.}}

\usepackage[disable]{todonotes} 
\usepackage[normalem]{ulem} 

\begin{document}

\maketitle

\begin{abstract}
In this paper we consider two different views of the model checking
  problem for the Linear Temporal Logic (LTL).
  On the one hand, we consider the \emph{universal} model checking
  problem for LTL, where one asks that for a given system and a given
  formula all the runs of the system satisfy the formula.  On the
  other hand, the \emph{fair} model checking problem for LTL asks that for
  a given system and a given formula almost all the runs of the system
  satisfy the formula.

  It was shown that these two problems have the same theoretical
  complexity, \textit{i.e.}~PSPACE-complete.  A less expensive fragment was identified in a previous
  work, namely the \emph{Muller fragment}, which consists of combinations of repeated reachability properties.

  We consider \emph{prompt} LTL formulas (pLTL), that extend LTL with an additional operator, \textit{i.e.}~the
  \emph{prompt-eventually}.
  This operator ensures the existence of a
  bound such that reachability properties are satisfied within this bound.
  This extension comes at no cost since the model checking problem remains PSPACE-complete.
  
  We show that the corresponding Muller fragment of pLTL, with prompt repeated reachability properties, enjoys similar computational improvements. 
  Another feature of Muller formulas is that the model checking problem becomes easier under the fairness assumption.
  This distinction is lost in the prompt setting, as we show that the two problems are equivalent instance-wise.
  Subsequently, we identify a new prefix independent fragment of pLTL for which the fair model
  checking problem is less expensive than the universal one.
\end{abstract}

\section{Introduction}

Linear Temporal Logic (LTL) allows system designers to easily describe
behavioral properties of a system~\cite{PnuAmi77}.
Its expressive power and convenience of
use proved useful in many areas such as system design and
verification~\cite{ClarkeES86, QueilleS82}, agent
planning~\cite{CGV02,GV99}, knowledge representation~\cite{KR}, and
control and synthesis~\cite{PnueliR89, PnueliR89b}.  At the heart of
these applications a fundamental formal approach is always to be found, \textit{i.e.}~the \emph{model checking problem}~\cite{pspaceLTL}.

\paragraph{Universal and fair model checking}
When trying to verify a system against a specification, one usually
models the system as directed graphs called
\emph{Labeled Transition Systems} (LTS). A run of the
system is an infinite path in the LTS.  The standard approach to
model checking consists in verifying that all possible runs of the LTS
comply with the specification.
Some systems may not satisfy the specification because of a few
\emph{unlikely} runs. To avoid discarding these systems, the
\emph{fair} model checking approach gives a formal definition of
\emph{small sets} of executions, and then verifies that the set of
executions that do not satisfy the specification is indeed small.  
\begin{figure}
  \begin{center}
  \begin{minipage}{.45\textwidth}
    \centering
    \begin{tikzpicture}[>=latex', join=bevel, thick, initial text =]
      \useasboundingbox (-1,-1) rectangle (4,1);
      \node[] (0) at (0bp, 0bp) [draw, circle, initial]{$a$};
      \node [below] at (0.south) {\sf{Idle}};
      \node[] (1) at (50bp, 0bp) [draw, circle] {$b$};
      \node [below] at (1.south) {\sf{Query}};    
      \node[] (2) at (100bp, 0bp) [draw, circle] {$c$};
      \node [below] at (2.south) {\sf{Grant}};
      \draw[->,] (0) [] to node [below] {} (1);
      \draw[->,] (0) [loop above] to node [below] {} (0);
      \draw[->,] (1) [bend right] to node [below] {} (2);
      \draw[->,] (2) [bend right] to node [below] {} (1);
      \draw[->,] (1) [loop above] to node [below] {} (1);
      \draw[->,] (2) [loop above] to node [below] {} (2);
    \end{tikzpicture}
    \end{minipage}
    \hfill
    \begin{minipage}{.45\textwidth}
    \centering
    \begin{tikzpicture}[>=latex', join=bevel, thick, initial text =]
      \useasboundingbox (-1,-1) rectangle (4,1);
      \node[] (0) at (0bp, 0bp) [draw, circle, initial]{$a$};
      \node [below] at (0.south) {\sf{Idle}};
      \node[] (1) at (50bp, 0bp) [draw, circle] {$b$};
      \node [below] at (1.south) {\sf{Query}};    
      \node[] (2) at (100bp, 0bp) [draw, circle] {$c$};
      \node [below] at (2.south) {\sf{Grant}};
      \draw[->,] (0) [] to node [above] {$\frac{1}{2}$} (1);
      \draw[->,] (0) [loop above] to node [above] {$\frac{1}{2}$} (0);
      \draw[->,] (1) [bend right] to node [below] {$\frac{2}{3}$} (2);
      \draw[->,] (2) [bend right] to node [above] {$\frac{3}{4}$} (1);
      \draw[->,] (1) [loop above] to node [above] {$\frac{1}{3}$} (1);
      \draw[->,] (2) [loop above] to node [above] {$\frac{1}{4}$} (2);
    \end{tikzpicture}
    \end{minipage}
  \end{center}
  \vspace{-5mm}
  \caption{\label{fig:simpleEx} A protocol modeled as an LTS on the left, and as a probabilistic system on the right.}
\end{figure}

\begin{example}\label{ex:intro-fair}
  Consider the example on the left of \figurename~\ref{fig:simpleEx}, this figure
  models a ``toy protocol'' that could either be in an idle state, a
  querying state or granting state. Assume now that the modeler wants
  to check that \emph{{\bf the protocol is not always idle}}. Without
  any fairness assumption, this system will be rejected since it could
  always repeat the loop on state $a$.  Now, consider that this LTS is an abstracted view of the real protocol, which is a probabilistic system modeled as a Markov chain on the right of \figurename~\ref{fig:simpleEx}.
  Then, the probability associated with the system remaining idle forever is $0$. 
  We can decide to ignore this ``unlikely'' possibility and say that the
  system \emph{fairly} satisfies the specification.  
\end{example}
  In general, a set of executions of a Markov chain is considered \emph{small} if it has probability $0$. It turns out that the precise probabilities that appear in the system have no impact on which sets of executions have probability $0$.
  Thus, we can assume without loss of generality every probability distribution to be uniform, and represent systems as LTS instead. 


\paragraph{LTL model checking}
The complexity of verifying that every run of a system satisfies an LTL formula is known to be a
$\pspace$-complete problem~\cite{pspaceLTL}.
This complexity is measured with respect to
the size of the LTS and its specification, an LTL formula in our case.
This high complexity is due to the fact that one has to
encode the specification as a
\emph{non-deterministic Büchi automaton}, that can be exponential in size.
This yields an exponential blow-up and further techniques are required to keep the complexity within $\pspace$, see \textit{e.g.}~\cite{BK08}.

In order to circumvent this blow-up, a natural idea is to identify
fragments of LTL where the model checking problem becomes easier to solve.  
In the seminal work of Sistla and Clarke~\cite{pspaceLTL}, they identified fragments
where model checking is $\conp$-complete, which is more amenable to implementation than $\pspace$ as it opens the door to symbolic approaches using SAT-solvers.  In particular they showed
this complexity for the class of formulas that consists of Boolean combinations of reachability properties. We also cite the fragment identified by Muscholl et al.~\cite{MuschollW05}. 
This fragment is obtained by prohibiting the
use of reachability operators (until) and
restricting the formula to exclusively using next operators indexed
by a letter. They showed that the model checking problem is
$\np$-complete for this fragment.  Finally we highlight the work of Emerson et
al.~\cite{EmersonL87} where they studied the fragment of \emph{Muller
  formulas}, \textit{i.e.}~the fragment combining repeated reachability (\textit{i.e.}~B\"uchi) properties.
They showed that model checking becomes $\conp$-complete for Muller formulas.

The $\pspace$-complete complexity result also holds for the \emph{fair} model checking
problem~\cite{CY95}. 
However, for the fragment of \emph{Muller Formulas}, while
the universal model checking problem becomes $\conp$-complete, the fair
model checking problem can be solved by a polynomial time algorithm presented in~\cite{SVV07}.
In other words, this fragment allows one to take advantage of the fairness assumption to obtain tractable model checking procedures.


\begin{example}
  We go back to the example of \figurename~\ref{fig:simpleEx},
  and consider the following specification:
  \emph{{\bf infinitely
      often a query is made and infinitely often a grant is granted}.}
  This specification can be expressed as a conjunction of repeated reachability objectives, and thus as a Muller formula.
\end{example}

\paragraph{Prompt Formulas}
Consider the following natural specification for messaging protocols: \textbf{All along the execution, whenever a
message is sent an acknowledgment is received at a later step}.
A system may satisfy this specification in an unpractical way, where
the waiting time for the
acknowledgment grows arbitrarily large along some executions.
\emph{Prompt LTL} (pLTL), introduced by Kupferman et al.~\cite{pspacepLTL}, can
express a variant of this specification that enforces an upper limit on waiting times: \textbf{There exists a bound $k$ such that, along any
  execution, whenever a message is sent an acknowledgment is received
  within the next $k$ steps}.

\begin{figure}
  \begin{center}
\vspace{-5mm}
    \begin{tikzpicture}[>=latex', join=bevel, thick, initial text =]
      \node[] (1) at (50bp, 0bp) [initial,draw, circle] {$a$};
      \node [below] at (1.south) {\sf{Msg}};    
      \node[] (2) at (100bp, 0bp) [draw, circle] {$b$};
      \node [below] at (2.south) {\sf{Ack}};
      \draw[->,] (1) [bend right] to node [below] {} (2);
      \draw[->,] (2) [bend right] to node [below] {} (1);
      \draw[->,] (1) [loop above] to node [below] {} (1);
      \draw[->,] (2) [loop above] to node [below] {} (2);    
    \end{tikzpicture}
  \end{center} 
  \vspace{-5mm}
  \caption{\label{fig:ltlVspltl}
  An LTS that satisfies a Muller formula but not its \emph{prompt} variant, as per Example~\ref{ex: ltlVspltl}.
  }
\vspace{-2mm}
\end{figure}

\begin{example}
  \label{ex: ltlVspltl}
  Consider the LTS of \figurename~\ref{fig:ltlVspltl}, and consider the
  specification asking that either {\sf Msg} or {\sf Ack} is seen
  infinitely often. Clearly enough, any infinite path in the system 
  will satisfy this specification.
  Consider now the prompt variant of this specification, asking for the existence of a bound $k$ such that either {\sf Msg} is seen
  infinitely often in a prompt way, \textit{i.e.}~with a maximum of $k$ steps in between successive occurrences, or {\sf Ack} is seen
  infinitely often in a prompt way.
  Now, for any bound $k\ge 0$, the run
  $a^{k+1}b^{k+1}\cdots$ does not satisfy this specification.
  As such, the system does not satisfy the prompt specification.
\end{example}

The model checking problem for pLTL
is also known to be PSPACE-complete~\cite{pspacepLTL}. This is achieved
through an \emph{efficient translation} into LTL. The fair model
checking problem has the same complexity as well: although an
explicit proof is not published, a careful inspection of the proof
of~\cite{pspacepLTL} shows that the translation into LTL formulas
holds for the fair setting, and thus it is sufficient to invoke the
algorithm from~\cite{CY95} without further blowup.

In this paper we consider the 
\emph{Muller fragment} of pLTL, that combines the prompt variants of repeated reachability properties such as the one described in Example~\ref{ex: ltlVspltl}.

\paragraph{Contributions and organisation}
Our original contributions are as follows.

We first show that universal model checking for the Muller fragment of
pLTL is $\conp$-complete. In order to show the membership, we had to
depart from the already existing reduction to classical LTL and
develop new technical tools. 
Roughly speaking, we develop combinatorial tools to represent runs
and the reasons why they might not satisfy a prompt repeated reachability property of bound $k$.
If one thinks about a faulty run as an infinite run containing a finite window of $k$ consecutive faulty states, then 
\emph{pumping} a cycle within that window leads to a longer window of faulty states.
Thus, this run can be used as a generator for faulty runs of arbitrarily long bounds $k'>k$, cf.~Lem.~\ref{lem:pboundgeneral}.
However, one has to pay particular attention to the case where temporal operators are nested. In particular, ``pumping'' these finite sequences of faulty runs is done according to a well chosen order, cf.~Section~\ref{subsec:pump} for a formal definition of the notion of \emph{multi-pumpings}.

In Section~\ref{subsec:witness}, we prove a \emph{small witness property}. This notion, in some sense, efficiently stores data about the existence of counter-examples, 
and results in a $\conp$ procedure.


Our second contribution is to show that the fair model checking problem for this fragment does not need to be studied separately as it coincides with the universal model checking problem, cf.~Thm.~\ref{thm:ufeq}.
Indeed we prove that if a
system is a fair model for some pLTL Muller formula, then every
path must satisfy the formula. 

Further, we note that prompt Muller formulas may sometimes require ``too much'' from a system. In particular it is possible for a system to violate a specification during an \emph{initial} phase of the execution, but
once it enters a
\emph{steady regime} the specification might be satisfied.

Our last contribution is to address this issue by 
introducing the \emph{initialized Muller} fragment for pLTL.
This fragment expresses the fact that a system should satisfy a
specification in the long run, \textit{i.e.}~we ignore the finite
initialization phase.  This vision is inspired from \emph{prefix
  independent} specifications. Such specification are only interested
in the asymptotic behavior of a system. For instance, \emph{parity,
  Rabin, Street, Büchi} are all $\omega$-regular specifications whose
satisfaction is independent of any finite prefix~\cite{BK08}. Not
only do these specifications seem to be more suited to real life
applications, they also in general enjoy nice properties, especially
in the probabilistic setting, cf.~\cite{Chatterjee07, GimbertH10,
  GimbertK14}.  We also mention results~\cite{BruyereHR16,Chatterjee0RR15}
where prefix independence has been considered in a
setting rather close to ours, and there again they exhibited well
behaved properties~\cite{BrihayeDOR20}.

In our case, we show that the fair model checking of prompt Muller formulas is more tractable on initialized formulas:
the universal model checking is still in $\conp$, but there exists a
polynomial time algorithm solving the fair model checking
problem for this fragment.


\begin{table}
\begin{center}
\vspace{-5mm}
\caption{\label{tbl:results}
Summary of our contributions on variants of the model checking problem.
}
\begin{tabular}{|r|cc||ccc|}
\hline
\multirow{2}{*}{Model checking} & \multicolumn{2}{c||}{non-prompt} & \multicolumn{3}{c|}{prompt} \\
 & \multicolumn{1}{c}{LTL} & Muller & \multicolumn{1}{c}{pLTL} & \multicolumn{1}{c}{Muller} & initialized Muller \\ \hline
%
%
 \multirow{2}{*}{{Universal}}      & \multicolumn{1}{c|}{ $\pspace$-c} & { $\conp$-c}   &    \multicolumn{1}{c|}{$\pspace$-c}      & \multicolumn{1}{c|}{} & {\textcolor{black}{$\conp$}}   \\ 
& \multicolumn{1}{c|}{\cite{pspaceLTL}}  & \cite{EmersonL87} & \multicolumn{1}{c|}{\cite{pspacepLTL}} & \multicolumn{1}{c|}{\textcolor{black}{$\conp$-c}} & {\textcolor{black}{Thm.~\ref{thm:TLTLnpcomp}}} \\ \cline{1-4} \cline{6-6}
\multirow{2}{*}{{Fair}}           & \multicolumn{1}{c|}{ $\pspace$-c} & { $\poly$} & \multicolumn{1}{c|}{$\pspace$-c} & \multicolumn{1}{c|}{\textcolor{black}{Thm.~\ref{thm:univconp}, \ref{thm:ufeq}}} & {\textcolor{black}{$\poly$}} \\ 
& \multicolumn{1}{c|}{\cite{CY95}} & \cite{SVV07} &  \multicolumn{1}{c|}{\cite{pspacepLTL}} & \multicolumn{1}{c|}{} & {\textcolor{black}{Thm.~\ref{thm:fairMC}}} \\ \hline
\end{tabular}
\end{center}
\vspace{-5mm}
\end{table}

\section{Preliminaries}

Throughout the document we will use the following notations and
conventions: $\ap$ is a finite set of atomic propositions. For an arbitrary
set $E$, $E^*$ is the set of finite sequences in $E$, and
$E^\omega$ is the set of infinite sequences of
$E$. When $E$ is a finite set, $|E|$ will denote its size.

\paragraph{Labelled Transition System}
An \emph{LTS} is a tuple $\ltsFull$ such that $\states$ is a set of states, $\sinit\in\states$ is an initial state, $\tr\subseteq\states\times\states$
is a set of transitions, and $\lbl\ : \ S \to 2^{\ap}$ is a  labeling function mapping every state to the atomic propositions that hold on it.

For a state $s\in\states$, the set of successors  of $s$ is
$\successor(s)=\set{t\in\states \mid (s,t)\in\tr}$.
A \emph{finite path} in $\lts$ is a finite sequence of states  $\pi=s_0 s_1 \cdots s_k$ of length  $k+1$ such that
$\forall 0\leq i\leq k-1,  s_{i+1} \in \successor(s_i)$. We denote by
$|\pi|$ the length of $\pi$, \textit{i.e.}~$|\pi| = k+1$.
A \emph{run} in $\lts$ is an infinite sequence of states $\rho = s_0 s_1 \cdots$ such that
$\forall i\geq 0$, $s_{i+1} \in \successor(s_i)$.
Let $\rho$  be a run and let $i\geq 0$, then $\rho[i]=s_i$,
$\rho[i..]=s_i s_{i+1} \cdots$, that is the infinite suffix starting
in the $(i+1)$th letter, and $\rho[..i]$ is the prefix up to the $(i)$th
position, that is $\rho[..i]=s_0\cdots s_{i-1}$.
For $i<j$, $\rho[i..j]$ is the finite path $s_i\cdots s_{j-1}$.
We use the same
notations for a finite path $\pi$, and in
this case $\pi[i..]$ will be a finite suffix.
The concatenation of a finite path $\pi$ with a finite path (or a run) $\pi'$ is denoted $\pi\pi'$. 
A cycle (or a loop) is a finite path $\pi=s_0 s_1 \cdots s_k$ such that $s_0\in\successor(s_k)$.
In particular, the finite path $\pi^n$ repeats for $n$ iterations the cycle $\pi$.

The set of states visited infinitely often by a run $\rho$ is denoted
$\infinity(\rho)$, and is formally defined as $\{s\in\states\mid \forall i\geq 0,\exists j \geq i,
\rho[j]=s\}$. 
Finally,
let $\paths$ (resp. $\runs$) be the set of all the finite paths (resp. runs) in $\lts$, and
let $\runsinit$ be the set of all runs starting from $\sinit$, that
is $\set{\rho \in \runs \mid \rho[0] = \sinit}$. 
These notations assume that $\lts$ is clear from context.

\paragraph{Linear Temporal Logic}
An \emph{LTL} formula $\varphi$ is defined using the following grammar:
\begin{align*}
  \varphi ::= \alpha \mid
  \lnot \varphi \mid
  \varphi \lor \varphi \mid
  \nxt \varphi \mid
  \varphi  \untl\varphi
  \ , \text{ where $\alpha$ ranges over $\ap$}.
\end{align*}
Runs $\rho$ of $\lts$ are evaluated inductively over LTL formulas as follows:
\begin{align*}
  \rho &\models \alpha \text{ iff } \alpha \in \lbl(\rho[0])\ ,\qquad\qquad\enspace
  \rho \models \lnot \varphi \text{ iff } \rho \not\models \varphi\ ,\\
  \rho &\models \varphi\lor\psi  \text{ iff } \rho \models \varphi \text{
    or } \rho \models \psi\ ,\qquad
  \rho \models \nxt \varphi \text{ iff } \rho[1..] \models \varphi\ ,\\
  \rho &\models \varphi \untl \psi \text{ iff } \exists i \geq 0,~
         \rho[i..] \models \psi \text{ and } \forall j < i,~
         \rho[j..] \models \varphi\ ,&&
\end{align*}

A \emph{state formula} is a formula that only contains Boolean operators ($\neg$ and $\lor$) and atomic propositions, and is thus entirely evaluated on the first position of $\rho$.
The operators $\nxt$ and $\untl$ are called \emph{temporal operators}.
We derive all standard Boolean operators from $\neg$ and $\lor$, let $\top$ and $\bot$ be atomic propositions that are respectively always true and always false, and define a few extra temporal operators as syntactic sugar:
  $\f \varphi = \top \untl \varphi\ ,
   \g \varphi = \lnot \f \lnot \varphi\ ,
  \finf \varphi = \g\f \varphi$.
The formula $\f \varphi$ is true for any run where $\varphi$ is \emph{eventually} true, while the formula $\g \varphi$ is true for any run where $\varphi$ \emph{always}
holds.
The formula $\finf \varphi$ holds for any run where
there exists infinitely many positions from where $\varphi$ is true, and is used to encode repeated reachability properties, such as the winning condition of a Büchi automaton.

\begin{problem}[Universal model checking]
  Given an LTS $\lts$ and an LTL formula $\varphi$,
  does $\rho \models \varphi$ hold true for
  every run
  $\rho \in \runsinit$?
  In this case, we write $\lts \models \varphi$.
\end{problem}

  The universal model checking problem 
  is $\pspace$-complete~\cite{pspaceLTL}.
We express the complexity of our model checking problems with respect to both the size of the formula $|\varphi|$, \textit{i.e.}~the number of operators
that appear in $\varphi$, and the size of the system $|\lts|$, \textit{i.e.}~$|\tr|+|\states|\max_{s\in\states} |\lbl(s)|$.

\paragraph{Fairness in model checking}
The fairness assumption presented in Example~\ref{ex:intro-fair} relies on the idea that a set of runs is sometimes considered quantitatively \emph{small}.
In particular, we use a \emph{fair coin} to view an LTS as a probabilistic system and derive a measure over paths. 
At each state, a fair coin is
flipped and the successor state is chosen accordingly.
This coin flipping is assumed to be i.i.d. and it induces a natural probability measure over
$\runsinit$.
In the fair model checking problem, one asks whether \emph{almost all}
the runs of an LTS satisfy a given formula $\varphi$ \textit{i.e.}~if a run obtained from the fair coin process satisfies $\varphi$ with probability $1$.

Formally, in order to build the probability measure over $\runsinit$, we use
the classical notion of \emph{cylinders}. For $\pi \in \paths$, the
cylinder induced by $\pi$, denoted $\cyl(\pi)$, is the set
$\set{\rho \in \runs \mid \pi \text{ is a prefix of } \rho}$.
Then, the probability of a run being in a cylinder $\prob_{\lts}(\cyl(\pi))$ is defined as the probability that $\pi$ is followed under the fair coin process.
This measure can be uniquely extended over the set $\runsinit$ using Carathéodory’s extension theorem. 

\begin{problem}[Fair model checking]
  Given an LTS $\lts$ and an LTL formula $\varphi$,
  does it holds that
  $\prob_{\lts}(\set{\rho \in \runsinit \mid \rho \models \varphi}) = 1$?
  In this case, we will $\lts \asMod \varphi$, where $\asMod$ stands for the ``almost sure'' satisfaction of a formula.
\end{problem}

Under the fairness assumption, a model checking procedure is intuitively allowed to ignore unrealistic behaviours. 
However, this does not simplify the complexity of the problem, as it has been shown that
  the fair model checking problem is also $\pspace$-complete~\cite{CY95}.

\paragraph{The \emph{Muller} fragment of LTL}
In an effort to obtain lower complexity results for the model checking problem, subclasses of LTL formulas, defined by syntactic restrictions, have been considered. In particular, 
    an LTL formula $\varphi$ is in the \emph{Muller fragment}, denoted $\varphi\in\fLTL$, if the repeated reachability operator $\finf$ is the only temporal operator that is allowed:
    \begin{align*}
  \varphi ::= \alpha \mid
  \lnot \varphi \mid
  \varphi \lor \varphi \mid
  \finf \varphi
  \ .
  \end{align*}

The key property of Muller formulas is that their satisfaction on a run $\rho$ is entirely determined by the initial state and $\infinity(\rho)$, the states visited infinitely often. This leads to lower complexity results: in \cite{SVV07}, it was shown that
  the universal model checking problem for Muller formulas is
  $\conp$-complete, while the fair model checking problem can be solved
  in polynomial time.

\paragraph{Promptness in  LTL}
A prompt LTL formula $\varphi$ is defined according to the following
grammar~\cite{pspacepLTL}:
$
  \varphi ::= \alpha \mid
  \lnot \alpha \mid
  \varphi \lor \varphi \mid
  \varphi \land \varphi \mid
  \nxt \varphi \mid
  \varphi  \untl\varphi \mid
  \varphi  \rels\varphi \mid
  \f_\p \varphi
$.

The main difference with LTL lies in the addition of $\f_\p$. This operator
states that $\varphi$ has to be satisfied eventually, but in a ``prompt'' fashion.  The
semantics of this operator are defined with respect to a bound $k\in\bbN$.
For a given $k$, we write $(\rho,k) \models \f_\p \varphi$ if $\exists i \leq k,~(\rho[i..],k) \models \varphi$, in contrast, recall that $\rho \models \f \varphi$ if $\exists i \in\bbN, \rho[i..] \models \varphi$.
The other operators ignore the bound $k$ and are evaluated using the semantics of LTL defined earlier. 
Another difference with LTL is that Boolean negations are not allowed in pLTL,
as the negation of the newly added 
operator $\f_\p$ is deemed unnatural from a modelling point of view.
Therefore, the grammar explicitly contains the conjunction $\land$, and
the \emph{release} operator $\rels$, the dual of the \emph{until} operator $\untl$,
previously expressed with negations.


\begin{problem}[Universal prompt model checking]
  Given an LTS $\lts$ and a pLTL formula $\varphi$, 
  does there exists a bound $k\in\bbN$ such that for all $\rho\in\runsinit$ in $\lts$,
  $(\rho,k)\models \varphi$?
  In this case, we write $\lts \models \varphi$.
\end{problem}

\begin{problem}[Fair prompt model checking]
  Given an LTS $\lts$ and a pLTL formula $\varphi$, 
  does there exists a bound $k\in\bbN$ such that $\prob_{\lts}(\set{\rho \in
  \runsinit \mid (\rho,k) \models \varphi}) = 1$?
  In this case, we write $\lts \asMod \varphi$.
\end{problem}

The addition of the prompt eventually operator $\fp$ comes at no extra cost compared with LTL, as both universal and fair model checking remain $\pspace$-complete~\cite{pspacepLTL}.

\paragraph{The prompt Muller fragment}
In this work, we consider a subclass of pLTL formulas inspired by Muller formulas.
We define a new operator, $\fpinf \varphi = \g \fp \varphi$,
as a prompt variant of $\finf$. 
Thus, $\fpinf \varphi$ holds true for a pair $(\rho, k)$ if from every position $i\in\bbN$,
a position $j\in[i,i+k]$ can be found so that $(\rho\sub{j}{},k)\models \varphi$.
A pLTL fromula $\varphi$ is in the \emph{prompt Muller fragment}, denoted $\varphi\in\fpLTL$, if it is obtained by the following grammar:
\[
  \varphi ::= \alpha \mid
  \lnot \alpha \mid
  \varphi \lor \varphi \mid
  \varphi \land \varphi \mid
  \fpinf \varphi
\,.\]

\begin{example}
    Consider the specifications mentioned in Example~\ref{ex: ltlVspltl}, that asks that
    either {\sf Msg} or {\sf Ack} is seen infinitely often.
    This is expressed by the Muller formula $\finf \mathsf{Msg} \lor \finf \mathsf{Ack}$.
    The prompt variant is the formula $\fpinf \mathsf{Msg} \lor \fpinf \mathsf{Ack}$,
    that intuitively asks that either {\sf Msg} or {\sf Ack} are seen \emph{frequently}, \textit{i.e.}~with a frequency that does not vanish along the execution.
\end{example}


\section{Prompt Muller formulas}
\label{secUniversal}
A remarkable property of the Muller fragment of LTL is that the
satisfaction of a formula only depends on the asymptotic behavior of
the system. Therefore, as shown in~\cite{SVV07} (a corollary of
a result from~\cite{EmersonL87}), a system satisfies a Muller formula
when every strongly connected set satisfies the
formula. 
This property yields an easier model checking problem, namely $\conp$-complete. In this section we focus on the prompt Muller fragment of pLTL. We show that the complexity of the model checking problem is again $\conp$-complete. However, the techniques involved are different. Indeed, they do not follow from structural properties of the transition system. We introduce combinatorial tools to establish a small witness property. We further deepen our study by considering runs obtained under the fairness assumption. We show that prompt Muller formulas describe the same set of runs with or without the fairness assumption. This differs from (non-prompt) Muller formulas, where fairness allowed for more tractable approaches.
As a corollary, we obtain that fair model checking is as expensive as universal model checking for the prompt Muller fragment.

Most of this section will be devoted to the proof of the following theorem:

\begin{restatable}{theorem}{univconp}\label{thm:univconp}
  The universal model checking problem for $\fpLTL$
  is $\conp$-complete.    
\end{restatable}


\subsection{Pumping a counter example}
\label{subsec:pump}

The first stepping stone to prove Thm.~\ref{thm:univconp} is to define the notion
of \emph{pumping}. The idea is quite simple: for a run $\rho$, a pumping of $\rho$ is a run where
some cycle of $\rho$ has been repeated. Formally, it is defined as follows.

\begin{definition}[Pumping]
  Given an LTS $\lts$ and a run $\rho\in\runs$, a \emph{pumping} of 
  $\rho$ is a run $\rho'\in\runs$ such that
  there exist $0\leq i<j$ and $l>0$, with $\rho\sub{i}{j}$ a cycle, so that $\rho'=\rho\sub{}{i}\rho\sub{i}{j}^{l-1}\rho\sub{i}{}$.
\end{definition}

If $i=0$, then by convention $\rho\sub{}{0}=\varepsilon$ is the empty path.
In particular, $\rho'$ contains $l$ copies of the cycle $\rho\sub{i}{j}$, as the last one is in $\rho\sub{i}{}$.
We will say that a pumping is a $1$-pumping, as one cycle is iterated.
A natural extension of this notion is to allow for the pumping of several cycles in a run.
Formally, we define a \emph{multi-pumping} inductively as follows, with a run being the only $0$-pumping of itself.

\begin{definition}[Multi-pumping]
  Given an LTS $\lts$, a run $\rho\in\runs$, and $n>0$, an \emph{$n$-pumping} of $\rho$ is a run $\rho'\in\runs$
  such that 
      there exist $0\leq i<j$, with $\rho\sub{i}{j}$ a cycle, and 
      there exists a $(n-1)$-pumping $\Tilde{\rho}$ of $\rho\sub{i}{}$ and some $l>0$ 
      so that
      $\rho'=\rho\sub{}{i}\rho\sub{i}{j}^{l-1}\Tilde{\rho}$. 
  A \emph{multi-pumping} of $\rho$ is a run $\rho'$ such that $\rho'$ is an $n$-pumping of $\rho$ for some $n\geq 0$.
\end{definition}

The construction of an $n$-pumping allows for several cycles of a run to be pumped, but only consecutively in a left-to-right fashion. 
This keeps the $n$ individual pumpings ordered and will allow for inductive proofs on $n$. 

  \begin{figure}[b]
\begin{center}
  \begin{minipage}{.8\textwidth}
    \centering
    \begin{tikzpicture}[thick]
    \node (A) at (-0.1,-0.05) {$\rho = $};
 \draw[-]        (0.4,0)   -- (3.7,0);
  \draw[->]        (3.7,0)   -- (4.5,0);
    \filldraw[black] (0.4,0) circle (1pt) node[anchor=south]{$a$};
    \filldraw[black] (0.8,0) circle (1pt) node[anchor=south]{$b$};
    \filldraw[black] (1.2,0) circle (1pt) node[anchor=south]{$a$};
    \filldraw[black] (1.6,0) circle (1pt) node[anchor=south]{$b$};
    \filldraw[black] (2,0) circle (1pt) node[anchor=south]{$c$};
    \filldraw[black] (2.4,0) circle (1pt) node[anchor=south]{$b$};
    \filldraw[black] (2.8,0) circle (1pt) node[anchor=south]{$a$};
    \filldraw[black] (3.2,0) circle (1pt) node[anchor=south]{$a$};
    \filldraw[black] (3.6,0) circle (1pt) node[anchor=south]{$d$};
    
  \draw[-,color=blue]        (1.2,-0.1)   -- (2.8,-0.1);
    
    \node (B) at (-0.2,-1) {$\rho' = $};
     \draw[-]        (0.4,-1)   -- (8.5,-1);
  \draw[->]        (8.5,-1)   -- (9.3,-1);
    \filldraw[black] (0.4,-1) circle (1pt) node[anchor=south]{$a$};
    \filldraw[black] (0.8,-1) circle (1pt) node[anchor=south]{$b$};
    \filldraw[black] (1.2,-1) circle (1pt) node[anchor=south]{$a$};
    \filldraw[black] (1.6,-1) circle (1pt) node[anchor=south]{$b$};
    \filldraw[black] (2,-1) circle (1pt) node[anchor=south]{$c$};
    \filldraw[black] (2.4,-1) circle (1pt) node[anchor=south]{$b$};
    \filldraw[black] (2.8,-1) circle (1pt) node[anchor=south]{$a$};
    \filldraw[black] (3.2,-1) circle (1pt) node[anchor=south]{$b$};
    \filldraw[black] (3.6,-1) circle (1pt) node[anchor=south]{$c$};
    \filldraw[black] (4,-1) circle (1pt) node[anchor=south]{$b$};
    \filldraw[black] (4.4,-1) circle (1pt) node[anchor=south]{$a$};
    \filldraw[black] (4.8,-1) circle (1pt) node[anchor=south]{$b$};
    \filldraw[black] (5.2,-1) circle (1pt) node[anchor=south]{$c$};
    \filldraw[black] (5.6,-1) circle (1pt) node[anchor=south]{$b$};
    \filldraw[black] (6,-1) circle (1pt) node[anchor=south]{$a$};
    \filldraw[black] (6.4,-1) circle (1pt) node[anchor=south]{$b$};
    \filldraw[black] (6.8,-1) circle (1pt) node[anchor=south]{$c$};
    \filldraw[black] (7.2,-1) circle (1pt) node[anchor=south]{$b$};
    \filldraw[black] (7.6,-1) circle (1pt) node[anchor=south]{$a$};
    \filldraw[black] (8,-1) circle (1pt) node[anchor=south]{$a$};
    \filldraw[black] (8.4,-1) circle (1pt) node[anchor=south]{$d$};
    
  \draw[-,color=blue]        (6,-1.1)   -- (7.6,-1.1);
  \draw[-,color=blue,dashed]        (1.2,-1.1)   -- (6,-1.1);
  \draw[-,color=green!50!black]        (6.4,-1.2)   -- (7.2,-1.2);
    
        \node (C) at (-0.2,-2) {$\rho'' = $};
     \draw[-]        (0.4,-2)   -- (10.1,-2);
  \draw[->]        (10.1,-2)   -- (10.9,-2);
    \filldraw[black] (0.4,-2) circle (1pt) node[anchor=south]{$a$};
    \filldraw[black] (0.8,-2) circle (1pt) node[anchor=south]{$b$};
    \filldraw[black] (1.2,-2) circle (1pt) node[anchor=south]{$a$};
    \filldraw[black] (1.6,-2) circle (1pt) node[anchor=south]{$b$};
    \filldraw[black] (2,-2) circle (1pt) node[anchor=south]{$c$};
    \filldraw[black] (2.4,-2) circle (1pt) node[anchor=south]{$b$};
    \filldraw[black] (2.8,-2) circle (1pt) node[anchor=south]{$a$};
    \filldraw[black] (3.2,-2) circle (1pt) node[anchor=south]{$b$};
    \filldraw[black] (3.6,-2) circle (1pt) node[anchor=south]{$c$};
    \filldraw[black] (4,-2) circle (1pt) node[anchor=south]{$b$};
    \filldraw[black] (4.4,-2) circle (1pt) node[anchor=south]{$a$};
    \filldraw[black] (4.8,-2) circle (1pt) node[anchor=south]{$b$};
    \filldraw[black] (5.2,-2) circle (1pt) node[anchor=south]{$c$};
    \filldraw[black] (5.6,-2) circle (1pt) node[anchor=south]{$b$};
    \filldraw[black] (6,-2) circle (1pt) node[anchor=south]{$a$};
    \filldraw[black] (6.4,-2) circle (1pt) node[anchor=south]{$b$};
    \filldraw[black] (6.8,-2) circle (1pt) node[anchor=south]{$c$};
    \filldraw[black] (7.2,-2) circle (1pt) node[anchor=south]{$b$};
    \filldraw[black] (7.6,-2) circle (1pt) node[anchor=south]{$c$};
    \filldraw[black] (8,-2) circle (1pt) node[anchor=south]{$b$};
    \filldraw[black] (8.4,-2) circle (1pt) node[anchor=south]{$c$};
    \filldraw[black] (8.8,-2) circle (1pt) node[anchor=south]{$b$};
    \filldraw[black] (9.2,-2) circle (1pt) node[anchor=south]{$a$};
    \filldraw[black] (9.6,-2) circle (1pt) node[anchor=south]{$a$};
    \filldraw[black] (10,-2) circle (1pt) node[anchor=south]{$d$};
    
  \draw[-,color=green!50!black,dashed]        (6.4,-2.1)   -- (8,-2.1);
  \draw[-,color=green!50!black]        (8,-2.1)   -- (8.8,-2.1);
\end{tikzpicture}
    \end{minipage}
    \hfill
    \begin{minipage}{.15\textwidth}
    \centering
    \begin{tikzpicture}[>=latex', join=bevel, thick, initial text =]
      \useasboundingbox (-1,-1) rectangle (1,1);
      \node[] (0) at (0bp, 0bp) [draw, circle, initial]{$a$};
      \node[] (1) at (0bp, 30bp) [draw, circle] {$b$};
      \node[] (2) at (-30bp, 30bp) [draw, circle] {$c$};
      \node[] (3) at (0bp, -30bp) [draw, circle] {$d$};
      \draw[->,] (0) [bend right] to (1);
      \draw[->,] (1) [bend right] to  (0);
      \draw[->,] (1) [bend right] to  (2);
      \draw[->,] (2) [bend right] to  (1);
      \draw[->,] (0) [] to  (3);
      \draw[->,] (0) [loop right] to  (0);
      \draw[->,] (3) [loop below] to  (3);
    \end{tikzpicture}
    \end{minipage}
  \end{center}
  
  \centering
    
    \caption{Graphical representation of Example~\ref{ex:multi_pumping}. Added loops are dashed.\label{fig:multi_pumping}}
\end{figure}

\begin{example}\label{ex:multi_pumping}
  Consider the run $\rho=ab(abcb)aad^\omega$ represented if \figurename~\ref{fig:multi_pumping},
  and note that $abcb$ is a cycle that can be pumped.
  Therefore, $\rho'=ab(abcb)^4aad^\omega$ is a $1$-pumping of $\rho$.
  Now, imagine that you also want to pump the cycle $bc$.
  By definition, further pumpings can only occur starting from the last ``copy'' of the cycle $abcb$. 
  In particular, $\rho''=ab(abcb)^3a(bc)^3baad^\omega$ is a $2$-pumping of $\rho$.
\end{example}

The main property of multi-pumpings is that they preserve counter-examples,
\textit{i.e.}~if a run $\rho$ does not satisfy a
prompt Muller formula $\varphi$ for some bound, then for the same bound any multi-pumping of $\rho$ will not satisfy
$\varphi$ either. 

\begin{restatable}{proposition}{pumpingnobetter}\label{prop:pumpingnobetter}
  Given an lts $\lts$, a run $\rho\in\runs$, a multi-pumping $\rho'$ of $\rho$, a formula $\varphi\in\fpLTL$ and $k\geq 0$,
  if $(\rho,k)\not\models\varphi$, then $(\rho',k)\not\models\varphi$.  
\end{restatable}

\begin{proof}[Proof scheme]
Intuitively, if there exists no nesting of $\fpinf$ operator in the formula, then the formula being false on $(\rho,k)$ only depends
on ``faulty'' windows of length $k$ that can be found in $\rho$.
As a multi-pumping only extends faulty windows by duplicating cycles, any faulty window in the original
run also exists somewhere in the multi-pumping.
If $\varphi$ has some nesting of $\fpinf$ operators, then a window being faulty or not depends on the suffix run that comes after it, which makes these considerations significantly more involved technically, as pumping changes the suffix of our runs. A full proof is detailed in Appendix~\ref{app:multipump}.
\end{proof}

The second core property of multi-pumpings, derived from Prop.~\ref{prop:pumpingnobetter}, is that they can be used to generate counter-examples for arbitrarily large bounds $k$, as long as there exists a counter example for the bound $N=|S|+1$. This is formalised as follows:

\begin{restatable}{lemma}{pboundgeneral}\label{lem:pboundgeneral}
  Let $\ltsFull$ be an LTS, $\varphi\in\fpLTL$, and let
  $N=|S|+1$.
  If there exists $\rho_N\in\runsinit$ such that $(\rho_N,N)\not\models\varphi$
  then for all $k\geq N$, there exists a multi-pumping $\rho_k$ of $\rho_N$ such that $(\rho_k,k)\not\models\varphi$.    
\end{restatable}

Thus, in order to show that a system \emph{does not satisfy} a prompt Muller formula for any bound $k$, it is enough to exhibit a single run $\rho$ so that
$(\rho,N)\not\models \varphi$. Indeed, $(\rho,k)\not\models \varphi$ is immediate for every $k\leq N$ by definition of pLTL,\footnote{Intuitively, if a faulty window exists to falsify an $\fpinf$ operator for some bound, then every shorter bound admits the same faulty window to falsify $\fpinf$.} and for any $k>N$, Lem.~\ref{lem:pboundgeneral} can be used to generate another run that will not satisfy $\varphi$ either. 

\subsection{Canonical representation for witnesses}
\label{subsec:witness}
Let $\rho$ be a run, and assume that $(\rho,N)\not\models \varphi$. Such a run witnesses that $\lts\not\models\varphi$, however, it is an infinite sequence.
Moreover, even with a
finite representation of $\rho$, checking that $(\rho,N)\not\models \varphi$ remains a challenging task.
In this section, we introduce a 
new data structure,
 that carries a representation of 
 $\rho$ and enough information to efficiently check that $(\rho,N)\not\models \varphi$.


Given an LTL formula $\varphi$, let $\subform (\varphi)$ be the set
of all subformulas of $\varphi$.

\begin{definition}\label{def:witness}
Given an LTS $\ltsFull$, a run $\rho\in\runs$, a bound $k\geq 0$ and a formula $\varphi\in\fpLTL$, a \emph{witness} for $\rho$ with bound $k$ is a function
$\witness_{\rho,k}:\subform(\varphi)\mapsto 2^\mathbb{N}$ 
that maps subformulas to finite sets and satisfies the following conditions:
\begin{itemize}
    \item $\forall i\in\witness_{\rho,k}(\alpha),\alpha\not\in\lbl(\rho[i])$
    \item $\forall i\in\witness_{\rho,k}(\lnot\alpha),\alpha\in\lbl(\rho[i])$
    \item $\forall i\in\witness_{\rho,k}(\psi_1\lor\psi_2),
    i\in\witness_{\rho,k}(\psi_1)\cap\witness_{\rho,k}(\psi_2)$
    \item $\forall i\in\witness_{\rho,k}(\psi_1\land\psi_2),
    i\in\witness_{\rho,k}(\psi_1)\cup\witness_{\rho,k}(\psi_2)$
    \item $\forall i\in\witness_{\rho,k}(\fpinf\psi),
    \exists i'\geq i,\forall i'\leq j\leq i'+k, j\in\witness_{\rho,k}(\psi)$
    \item $0\in\witness_{\rho,k}(\varphi)$
\end{itemize}
Then, we denote 
$\max(\witness_{\rho,k})=\max\{i\in\bbN\mid \psi\in\subform(\varphi) \land i\in \witness(\psi)\}$.
\end{definition}

Intuitively, such a witness justifies that $(\rho,k)\not\models\varphi$ by describing for each subformula a relevant set of positions from where they are falsified.
In particular, $0\in\witness_{\rho,k}(\varphi)$ ensures $(\rho,k)\not\models\varphi$.
Then, the other conditions ensure that inductively, the positions falsify subformulas up to reaching the atomic
propositions.



\begin{figure}
  \vspace{-5mm}
  \centering
  \begin{tikzpicture}[node distance=2cm,auto,->,>=latex,every
  node/.style={},
  level 1/.style={sibling distance=5.5cm},
  level 2/.style={sibling distance=4cm},
  level 3/.style={sibling distance=3cm},
  level 4/.style={sibling distance=1.6cm},
  level 5/.style={sibling distance=1cm},
  level 6/.style={sibling distance=0.6cm},
  level 7/.style={sibling distance=0.3cm},level distance=10mm, scale=0.95]

    \tikzstyle{TreeNode}=[draw,rounded rectangle, minimum size=7.2mm,inner sep=0]
    \tikzstyle{TreeLabel}=[color=blue]
    \tikzstyle{TreeEdge}=[solid,-]

     \node (A) at (-0.5,-0.05) {$\rho=$};
 \draw[-]        (0,0)   -- (10,0);
 \draw[->,dashed]        (10,0)   -- (11,0);
    \filldraw[black] (0,0) circle (1pt) node[anchor=south]{$0$};
    \filldraw[black] (1.5,0) circle (1pt) node[anchor=south]{$i_1$};
    \filldraw[black] (3.5,0) circle (1pt) node[anchor=south]{$i_2$};
    \filldraw[black] (4.5,0) circle (1pt) node[anchor=south]{$i_1+k$};
    \filldraw[black] (6,0) circle (1pt) node[anchor=south]{$i_3$};
    \filldraw[black] (9,0) circle (1pt) node[anchor=south]{$i_3+k$};
    
 \draw[-,color=blue]        (1.5,-0.1)   -- (3.5,-0.1);
  \draw[-,color=green!50!black]        (3.6,-0.1)   -- (4.5,-0.1);
     \node (A) at (2.5,-0.4) {$a$};
     \node  (A) at (4,-0.4) {$c$};
     
     \node[color=black]  (A) at (0,-0.7) {$\lnot \varphi$};
     
     \node[color=blue]  (A) at (2.5,-0.7) {$\lnot c$};
     \node[color=green!50!black]  (A) at (4,-0.7) {$\lnot a$};
     
    \node[color=blue]  (A) at (2.5,-1.1) {$\lnot \fpinf b$};
    
    \draw[-,color=red]        (6,-0.1)   -- (9,-0.1);
    \node[color=red]  (A) at (7.5,-1.1) {$\lnot b$};
  
  \node[TreeNode] (0) at (3,-2) {\makebox[0mm][c]{$\fpinf$}}
    child { node[TreeNode] (1) {\makebox[0mm][c]{$\land$}}
        child {node[TreeNode] (21) {\makebox[0mm][c]{$a$}} 
        edge from parent[TreeEdge] node[TreeLabel] {}
        }
        child {node[TreeNode] (22) {\makebox[0mm][c]{$\lor$}} 
            child {node[TreeNode] (31) {\makebox[0mm][c]{$\fpinf$}} 
                child {node[TreeNode] (4) {\makebox[0mm][c]{$b$}} 
                edge from parent[TreeEdge] node[TreeLabel] {}
                }
            edge from parent[TreeEdge] node[TreeLabel] {}
            }
            child {node[TreeNode] (32) {\makebox[0mm][c]{$c$}} 
            edge from parent[TreeEdge] node[TreeLabel] {}
            }
        edge from parent[TreeEdge] node[TreeLabel] {}
        }
    edge from parent[TreeEdge] node[TreeLabel] {}
    }
    ;
	\node[at=(0),xshift=-4mm,color=black] {\makebox[0mm][r]{\small$\witness_{\rho,k}:=$}};

	\node[at=(0),xshift=4mm,color=black] {\makebox[0mm][l]{\small$\mapsto\{\textcolor{black}{0}\}$}};
    \node[at=(1),xshift=4mm,color=black] {\makebox[0mm][l]{\small$\mapsto\{\textcolor{blue}{i_1,\dots,i_2,}\textcolor{green!50!black}{i_2+1,\dots,i_1+k}\}$}};
    \node[at=(21),xshift=4mm,color=black] {\makebox[0mm][l]{\small$\mapsto\{\textcolor{green!50!black}{i_2+1,\dots,i_1+k}\}$}};
    \node[at=(22),xshift=4mm,color=black] {\makebox[0mm][l]{\small$\mapsto\{\textcolor{blue}{i_1,\dots,i_2}\}$}};
    \node[at=(31),xshift=4mm,color=black] {\makebox[0mm][l]{\small$\mapsto\{\textcolor{blue}{i_1,\dots,i_2}\}$}};
    \node[at=(32),xshift=4mm,color=black] {\makebox[0mm][l]{\small$\mapsto\{\textcolor{blue}{i_1,\dots,i_2}\}$}};
    \node[at=(4),xshift=4mm,color=black] {\makebox[0mm][l]{\small$\mapsto\{\textcolor{red}{i_3,\dots,i_3+k}\}$}};

	\node[at=(32),xshift=24mm] {};
\end{tikzpicture}
  \caption{\label{fig:witnessdefinition}Example of a run that does not satisfy a formula, and the corresponding
  witness.}
  \vspace{-2mm}
\end{figure}

\begin{example}\label{ex:witnessdefinition}
  Consider the run $\rho$ of \figurename~\ref{fig:witnessdefinition}, and
  the formula $\varphi=\fpinf(a\land(\fpinf b\lor c))$. 
  Some of the labels of states in $\rho$ are represented, so that $a$ holds between positions $i_1$ and $i_2$ for example.
  The syntactic tree of the formula is also represented in \figurename~\ref{fig:witnessdefinition}, where every subtree corresponds to a subformula.
  Here, for a given bound $k$, it is assumed that $(\rho,k)\not\models\varphi$, and a potential witness for $\rho$ with bound $k$ is described on the syntactic tree of $\varphi$: every subformula is mapped to a set of positions.
  Then, one can check that every inductive rule of Def.~\ref{def:witness} holds.
  For example, for the subformula $a$, no state $\rho[i]$ such that $i\in\witness_{\rho,k}$ has label $a$,
  and for the subformula $a\land(\fpinf b\lor c)$, it holds that
  $\witness_{\rho,k}(a\land(\fpinf b\lor c))=\witness_{\rho,k}(a)\cup \witness_{\rho,k}(\fpinf b\lor c)$.
  Note how
  the witness describes which positions along $\rho$ are relevant to prove where each subformula is
  falsified, up to the position $0$ at the root $\witness_{\rho,k}(\varphi)$, so that
  $(\rho,k)\not\models\varphi$.
\end{example}

Note that Def.~\ref{def:witness} does not require a witness $\witness$ to be small in size, as it could:
\begin{itemize}
    \item map subformulas $\psi_1$ and $\psi_2$ to positions that are lost by intersection in $\witness(\psi_1\lor\psi_2)$,
    \item map $\psi$ to more than the $k$ positions needed to falsify $\witness(\fpinf \psi)$, and
    \item use positions that are needlessly far away, so that $\max(\witness)$ is too large.
\end{itemize}

We address all of these concerns by proving a \emph{small witness property}, \textit{i.e.}, there exists a path $\rho$ with $(\rho,k)\not\models\varphi$ iff there exists a \emph{small} witness for some path $\rho'$ with bound $k$.
Here, small is meant as a polynomial upper bound on the size needed to represent $\witness$, \textit{i.e.}~$|\varphi|\max(\witness)$.

\begin{restatable}{proposition}{smallwitness}\label{prop:smallwitness}
  Given a formula $\varphi\in\fpLTL$, an LTS $\ltsFull$ and a bound $k\geq 0$, there exists a run
  $\rho\in\runs$ such that $(\rho,k)\not\models\varphi$ iff there exists a finite path
  $\pi$ and a function $\witness$ such that for all runs $\pi\rho'$, $\witness$ is a witness for $\pi\rho'$
  with bound $k$ and $|\pi|\leq (k+1)|\varphi|(|\states|+1)$ and $\max(\witness)< |\pi|$.    
\end{restatable}

\begin{proof}[Proof scheme]
First, we show that if $\lts\not\models\varphi$ then a witness $\witness$ can be obtained from any counter-example run based on the semantics of pLTL.
Then, we show that if there exists a witness then there exists one
of size polynomial in $k$ and $\varphi$.
This intuitively comes from the fact that for the case $\fpinf\psi$, 
one window of length $k$ falsifying $\psi$ is sufficient, which prevents the sets of indexes from needing to be large.
Third, we show that given a witness, we can construct a finite path $\pi$ of length polynomial in the size
of the witness and the size of the system such that for any run $\rho'$, we have $(\pi\rho',k)\not\models\varphi$.
This holds because the states of $\rho$ that appear in the witness are enough to guarantee that the formula is
falsified, and in-between those states, we can always find a short path. This construction lets us obtain a polynomial bound for $\max(\witness)$, and conclude. A full proof is detailed in Appendix~\ref{app:smallwitness}.
\end{proof}

Therefore, in order to show that a system does not satisfy $\varphi$,
it is now enough to search for a finite path $\pi$ and a witness $\witness$ of polynomial size, as described in Prop.~\ref{prop:smallwitness}.

\subsection{Universal model checking}


We are now ready to prove $\conp$ membership, as a corollary of Prop.~\ref{prop:smallwitness}:

\begin{restatable}{lemma}{univinconp}\label{lem:univinconp}
  The universal model checking problem for $\fpLTL$ is in $\conp$.  
\end{restatable}

\begin{proof}
Given a formula $\varphi$ and a system $\ltsFull$, one can guess a witness of polynomial size for the bound
$N=|\states|+1$, and check that it is indeed a witness. The check can be done bottom up on the formula and is clearly
polynomial, as it consists of label checks and standard set operations.
Prop.~\ref{prop:smallwitness} guarantees that the algorithm is correct,
while Lem.~\ref{lem:pboundgeneral} ensures that
checking the bound $k$ equal to $N$ is equivalent to checking every bound in $\bbN$.
This non-deterministic procedure is detailed in 
Algorithm~\ref{alg:promptunivMC}.
\end{proof}

\begin{algorithm}
 \caption{$\conp$~algorithm for the universal model checking of $\fpLTL$.\label{alg:promptunivMC}}
 \KwData{An lts $\ltsFull$ and a formula $\varphi\in\fpLTL$}
 \KwResult{whether $\lts\not\models \varphi$}
 \SetKwFunction{fun}{CheckW}
 \SetKwFunction{funpath}{CheckPath}
 \medskip
 \SetKwProg{myalg}{GuessAndCheck}{}{}
 \SetKwProg{check}{Check}{}{}
 \SetKwData{n}{N}
 \myalg{$(\lts, \varphi)$}{
   guess a finite path $\pi$ such that $|\pi|\leq (|\states|+2)|\varphi|(|\states|+1)$\;
   guess a function $\witness:\subform(\varphi)\mapsto 2^\mathbb{N}$ such that $\max(\witness)< |\pi|$\;
   \Return $(0\in\witness(\varphi))\land$\fun{$\lts, \witness, \varphi,\pi$}\;
 }
 \SetKwProg{myFun}{}{}{}
 \myFun{\fun{$\lts, \witness, \varphi,\pi$}}{
   \If{$\varphi=\alpha$} {
     \Return $\forall i\in\witness(\varphi), a\not\in\lbl(\pi[i])$\;
   }
   \ElseIf{$\varphi=\lnot\alpha$} {
     \Return $\forall i\in\witness(\varphi), a\in\lbl(\pi[i])$\;
   }
   \ElseIf{$\varphi=\psi_1\lor\psi_2$}{
     \Return $(\witness(\varphi)=\witness(\psi_1)\cap\witness(\psi_2))\land$\fun{$\lts,\witness, \psi_{1},\pi$} $\land$ \fun{$\lts,\witness, \psi_{2},\pi$}\;
   }
   \ElseIf{$\varphi=\psi_1\land\psi_2$}{
     \Return $(\witness(\varphi)=\witness(\psi_1)\cup\witness(\psi_2))\land$\fun{$\lts,\witness, \psi_{1},\pi$} $\land$ \fun{$\lts,\witness, \psi_{2},\pi$}\;
   }
   \ElseIf{$\varphi=\fpinf\psi$}{
    \Return $(\exists 0\le i \le |\pi|, \forall i\le j \le i+|\states|+1, j\in\witness(\psi))\land$ \fun{$\lts,\witness,\psi,\pi$}\;
   }
 }

\end{algorithm}

In order to finish the proof of Thm.~\ref{thm:univconp}, we show the following lower complexity bound.

\begin{restatable}{lemma}{flathardness}\label{lem:flathardness}
  The universal model checking problem for $\fpLTL$ is $\conp$-hard.  
\end{restatable}

This result is obtained as a reduction from the Boolean satisfiability problem to model checking, based on \figurename~\ref{fig:hardness}.
Somewhat classically, the reduction draws parallels between executions in this system and valuations over the Boolean variables $x_1,\dots,x_n$, based on which states $x_i$ are visited.
The main novelty resides in the $\fpLTL$ formula built in the reduction: as we cannot use reachability properties directly, we need to carefully encode ``reaching $x_i$'' in a roundabout way, that uses $\fpinf$ operators and the self-loops on each $x_i$.

\begin{figure}
\vspace{-5mm}
\centering
\begin{tikzpicture}[>=latex', join=bevel, thick, initial text =,scale=0.9]
    \node[state] (0) at (0bp, 0bp) [draw, circle, initial]{$\sinit$};
    \node[state] (1) at (60bp, 20bp) [draw, circle] {$x_1$};
    \node[state] (2) at (60bp, -20bp) [draw, circle] {$\overline{x_1}$};
    \node[state] (3) at (120bp, 20bp) [draw, circle] {$x_2$};
    \node[state] (4) at (120bp, -20bp) [draw, circle] {$\overline{x_2}$};
    \node[state] (5) at (180bp, 0bp) [draw=none] {$\cdots$};
    \node[state] (6) at (180bp, 20bp) [draw=none] {};
    \node[state] (7) at (180bp, -20bp) [draw=none] {};
    \node[state] (8) at (240bp, 20bp) [draw, circle] {$x_n$};
    \node[state] (9) at (240bp, -20bp) [draw, circle] {$\overline{x_n}$};
    \node[state] (10) at (300bp, 0bp) [draw, circle] {$s_n$};
    \draw[->,] (0) [] to node [below] {} (1);
    \draw[->,] (0) [] to node [below] {} (2);
    \draw[->,] (2) [loop below] to node [below] {} (2);
    \draw[->,] (1) [loop above] to node [below] {} (1);
    \draw[->,] (4) [loop below] to node [below] {} (4);
    \draw[->,] (3) [loop above] to node [below] {} (3);
    \draw[->,] (1) [] to node [below] {} (3);
    \draw[->,] (1) [] to node [below] {} (4);
    \draw[->,] (2) [] to node [below] {} (3);
    \draw[->,] (2) [] to node [below] {} (4);
    \draw[->,] (7) [] to node [below] {} (8);
    \draw[->,] (7) [] to node [below] {} (9);
    \draw[->,] (6) [] to node [below] {} (8);
    \draw[->,] (6) [] to node [below] {} (9);
    \draw[->,] (3) [] to node [below] {} (6);
    \draw[->,] (3) [] to node [below] {} (7);
    \draw[->,] (4) [] to node [below] {} (6);
    \draw[->,] (4) [] to node [below] {} (7);
    \draw[->,] (8) [] to node [below] {} (10);
    \draw[->,] (9) [] to node [below] {} (10);
    \draw[->,] (9) [loop below] to node [below] {} (9);
    \draw[->,] (8) [loop above] to node [below] {} (8); 
    \draw[->,] (10) [loop above] to node [below] {} (10); 
\end{tikzpicture}
\vspace{-3mm}
\caption{\label{fig:hardness} System used in Lem.~\ref{lem:flathardness}. States are labeled by their name, \textit{e.g.}~$x_1\in\lbl(x_1)$.}
\vspace{-3mm}
\end{figure}

\begin{proof}
    Let $\bigwedge\limits_{i=1}^l \bigvee\limits_{j=1}^3 l_{i,j}$ be an instance of 3-SAT over variables $x_1,\dots,x_n$, where every literal $l_{i,j}$ is either a variable $x$ or its negation $\bar{x}$.
    Consider the system $\lts$ in \figurename~\ref{fig:hardness}, and the $\fpLTL$ formula $\varphi=\varphi_1\lor\varphi_2$, obtained from $
  \varphi_1=\bigvee\limits_{i=1}^n(\fpinf \lnot x_i \land \fpinf \lnot \bar{x_i})$ and $
  \varphi_2=
  \bigvee\limits_{i=1}^l \bigwedge\limits_{j=1}^3 \fpinf\lnot l_{i,j}
  $.
  We show that the 3-SAT formula is \emph{unsatisfiable} (a $\conp$-hard problem) iff $\lts\models\varphi$.

  Assume that for every valuation $\nu$, $\nu\not\models\bigwedge\limits_{i=1}^l \bigvee\limits_{j=1}^3 l_{i,j}$, and thus $\nu\models\bigvee\limits_{i=1}^l \bigwedge\limits_{j=1}^3 \bar{l}_{i,j}$.
  Every path from $\sinit$ to $s_n$ can be seen as encoding a valuation $\nu$, based on the visited variables.
  Then, for every run $\rho$ in $\lts$ that reaches $s_n$,
  we have $\rho\models \bigvee\limits_{i=1}^l \bigwedge\limits_{j=1}^3 \f \bar{l}_{i,j}$.
  Note that a run that does not reach $s_n$ (and gets stuck in one of the loops) also satisfies this formula, as removing $\f$ terms from conjunctions can only help.
  Moreover, we note that $\rho\models \f \bar{l}_{i,j}$ implies $\rho\models \g\neg l_{i,j}$, as a run of $\lts$ cannot visit both a variable and its negation.
  Then, by definition of $\fpinf$ as $\g\fp$, we have that $\rho\models \g\neg l_{i,j}$ is equivalent with $(\rho,0)\models \fpinf \neg l_{i,j}$.
  Therefore, for every run $\rho$, $(\rho,0)\models\varphi_2$.
  This implies that there exists a $k$ such that every run satisfies $\varphi_1\lor\varphi_2$, \textit{i.e.}~$\lts\models\varphi$.

  Assume now that $\lts\models\varphi$, \textit{i.e.}~there exists a $k$ so that every run satisfies $\varphi_1\lor\varphi_2$.
  Let $R$ be the set of runs that go through $\lts$ by iterating every self-loop on the states $x_i$ or $\bar{x_i}$ exactly $k+1$ times before continuing, until $s_n$ is reached.
  Then, for every run $\rho\in R$ and every variable $x$, we have 
  that either $(\rho,k)\models\fpinf\lnot x$ holds (if $x$ is not visited), or $(\rho,k)\models\fpinf\lnot \bar{x}$ holds (if $\bar{x}$ is not visited).
  This is an exclusive either/or because every state that is seen is iterated $k+1$ times, so that $(\rho,k)\not\models\varphi_1$.
  It follows that the runs in $R$ must all satisfy $\varphi_2$.
  Since a run in $R$ satisfies $\fpinf\lnot x$ iff
  it satisfies $\f \bar{x}$, every run
  in $R$ satisfies $\bigvee\limits_{i=1}^l \bigwedge\limits_{j=1}^3 \f\bar{l}_{i,j}$.
  Then, if we interpret the runs in $R$ as valuations $\nu$ based on which variables are visited,
  we have $\nu\models\bigvee\limits_{i=1}^l \bigwedge\limits_{j=1}^3 \bar{l}_{i,j}$ for every valuations $\nu$, so that the 3-SAT formula $\bigwedge\limits_{i=1}^l \bigvee\limits_{j=1}^3 l_{i,j}$ is unsatisfiable.
\end{proof}

\subsection{Expressiveness under the fairness assumption}
\label{secFair}
Let us focus our attention on the expressiveness of the prompt Muller fragment under fairness. It turns out that for this fragment, assuming fairness does not change the interpretation of a formula. In particular, the complexity of the model checking problem remains the same.

Indeed, we note the following property of $\fpLTL$, obtained as a corollary of Prop.~\ref{prop:smallwitness}:
\begin{corollary}\label{cor:ufeq}
  Given a formula $\varphi\in\fpLTL$, an LTS $\lts$ and a bound $k\geq 0$, 
  if there exists a run $\rho$ so that
  $(\rho,k)\not\models\varphi$, then there exists a finite path $\pi$ 
  such that for all $\rho'\in\cyl(\pi)$, 
  $(\rho',k)\not\models\varphi$.
\end{corollary}

Thus, $\lts\not\models \varphi$ implies for any choice of $k$ the existence of an entire cylinder $\cyl(\pi)$
of counterexamples for the bound $k$.
The cylinder of a finite path must always have non-zero measure under $\prob_\lts$, by definition of our coin-toss process.
As such, $\lts\not\models \varphi$ implies $\prob_{\lts}(\set{\rho) \in \runsinit \mid (\rho,k) \models \varphi}) < 1$
for every $k$, so that $\lts\notAsMod\varphi$. This proves the following theorem, as the reverse implication holds by definition.

\begin{theorem}
  \label{thm:ufeq}
  For all LTS $\lts$ and every formula $\varphi\in\fpLTL$,
  $\lts\models \varphi$ iff $\lts \asMod \varphi$.
\end{theorem}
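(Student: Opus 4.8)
The plan is to prove the two implications separately. The direction $\lts \models \phi \Rightarrow \lts \asMod \phi$ is immediate: if there is a single bound $k$ witnessing that \emph{every} initialized run satisfies $\phi$ with parameter $k$, then in particular the set $\{\rho \in \runsinit(\lts) \mid (\rho,k)\models\phi\}$ is all of $\runsinit(\lts)$, which has measure $1$. So the content is entirely in the converse: fair satisfaction forces universal satisfaction. For this I would isolate the key fact as a lemma (the paper announces it as Lemma~\ref{lem:ufeq}): if $\lts \asMod \phi$ for a prompt Muller formula $\phi$, then in fact \emph{every} run, including the measure-zero ones, satisfies $\phi$ under a common bound.

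The core observation is prefix-independence together with a ``reachability-closure'' property of prompt Muller formulas. First I would argue that the semantics of $\fpinf$ is prefix independent: for any run $\rho$ and any $m$, $(\rho,k)\models \fpinf\psi$ iff $(\rho[m..],k)\models\fpinf\psi$, and more generally every formula built by the grammar $\phi ::= \alpha \mid \lnot\alpha \mid \phi\lor\phi \mid \phi\land\phi \mid \fpinf\phi$ has the property that its satisfaction (with fixed bound $k$) depends only on the ``tail behavior'' — precisely, I would prove by induction on $\phi$ that whether $(\rho,k)\models\phi$ is determined by the sequence $\infinity(\rho)$ together with the order/spacing structure, and crucially that if $(\rho,k)\models\phi$ and $\rho'$ is a run with $\rho'[m..] = \rho[m..]$ for some $m$ and the bound is at least $\ldots$ then $(\rho',k')\models\phi$ for a suitably adjusted $k'$. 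The cleanest packaging: show that for a prompt Muller formula $\phi$, if $(\rho,k)\models\phi$ then for every finite prefix $w$ and every run $\sigma$ of the form $w\cdot\rho[m..]$ that is a legal run of $\lts$, we have $(\sigma, k+|w|)\models\phi$ — extending the bound to absorb the prepended prefix. This is the technical heart and the step I expect to be the main obstacle, since one has to track how $\f_\p$ bounds interact with finite prefixes through nested $\fpinf$; the canonical form and tree machinery from Section~\ref{secUniversal} should help, since it reduces $\phi$ to a disjunction of $\fpinf$-formulas whose falsification is governed by avoiding cycles, quantities insensitive to finite perturbations.

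Granting that lemma, the argument concludes as follows. Suppose $\lts \asMod \phi$, so there is $k$ with $\prob_\lts(\{\rho \mid (\rho,k)\models\phi\}) = 1$. Let $\rho$ be an arbitrary initialized run; I want $(\rho, k')\models\phi$ for some uniform $k'$. Since the measure-one set is nonempty, pick any run $\sigma$ in it; write $\rho = \rho[..m]\cdot\rho[m+1..]$ and observe that the tail $\rho[m+1..]$, being a run from the state $\rho[m+1]$, can be ``reached'' — more to the point, I would instead pick $\sigma$ in the measure-one set that eventually enters the BSCC reachable from $\rho$'s tail, or directly: the set of runs that, after some finite prefix, coincide with $\rho[m..]$ for arbitrarily large $m$ has positive measure when $\rho$ eventually stays in a single BSCC, and every run's tail does eventually reach a BSCC whose behavior the measure-one set must witness. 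Combining with the prefix-absorption lemma applied in both directions (strip the prefix of a satisfying $\sigma$, then prepend the prefix of $\rho$), we get $(\rho,k')\models\phi$ with $k'$ depending only on $k$, $|\states|$. Since $\rho$ was arbitrary and $k'$ uniform, $\lts\models\phi$. I would take care to handle the case where $\rho$'s tail is not confined to a single BSCC by noting $\infinity(\rho)$ is always contained in a single SCC, which is enough for the prefix-independence argument to latch onto, and then connect to a measure-one witness through that SCC's reachable BSCC.
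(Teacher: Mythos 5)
Your easy direction is fine, but the converse is where the whole content of the theorem lies, and your plan for it has a genuine gap. You propose to transfer satisfaction from a run $\sigma$ in the measure-one set to an arbitrary run $\rho$ by (i) finding such a $\sigma$ sharing a tail with $\rho$, and (ii) absorbing the prepended prefix into the bound, i.e.\ $(\rho[m..],k)\models\phi$ implies $(w\cdot\rho[m..],k+|w|)\models\phi$. Both steps fail. For (i), the set of runs that eventually coincide with some suffix of $\rho$ is countable and in general null (e.g.\ whenever every state has at least two successors), so the measure-one set need not contain any run tail-equivalent to $\rho$; and satisfaction of an $\fpinf$-formula at a \emph{fixed} bound is not determined by $\infinity(\rho)$ or by which BSCC is reached (compare $(ab)^\omega$ with $a^{k+1}b^{k+1}\cdots$ in Fig.~\ref{fig:ltlVspltl}), so routing through ``the reachable BSCC'' does not recover a witness. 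For (ii), even granting the transfer, the resulting bound $k+|w|$ depends on $\rho$ through $|w|$, which is unbounded over $\runsinit(\lts)$: at best you obtain that every run satisfies $\phi$ for \emph{some} bound, i.e.\ the weak semantics, which the paper's Section~\ref{sec:discussion} shows is strictly weaker than $\lts\models\phi$ on exactly this fragment (in Fig.~\ref{fig:TLTLnpcomp} with $\fpinf\mathsf{A}\lor\fpinf\mathsf{B}$, the run $a^mb^\omega$ requires bound $m$). A further, minor, point: the fragment is not prefix independent, since top-level literals $\alpha,\lnot\alpha$ are allowed.

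The paper's argument runs in the opposite direction and is about the \emph{negation}: Lemma~\ref{lem:ufeq} (which you cite but whose content you misstate) shows by structural induction on $\phi$ that if $(\rho,k)\not\models\phi$ then some finite prefix $w$ of $\rho$ already forces $(\rho',k)\not\models\phi$ for every $\rho'\in\cyl(w)$; the delicate case $\phi=\fpinf\psi$ is handled by locating $k+1$ consecutive positions falsifying $\psi$, extracting a finite witness prefix for each, and taking a prefix of $\rho$ covering them all. Thus for each fixed $k$ the counterexample set is open, hence of positive probability whenever nonempty; contrapositively, $\lts\not\models\phi$ yields for every $k$ a positive-measure cylinder of counterexamples at bound $k$, so $\lts\not\asMod\phi$. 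This sidesteps the uniform-bound issue entirely: one never has to manufacture a single bound valid for all runs, only to propagate a failure at a fixed $k$ to a positive-measure set. If you want to salvage your plan, the statement to prove is this openness of $\set{\rho \mid (\rho,k)\not\models\phi}$, not prefix-absorption of satisfaction.
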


Let us give some alternative intuition as to why the complexity drop between the universal and fair model checking problems
for Muller formulas does not extend under promptness:
\begin{itemize}
\item The study of an $\fLTL$ formula in a system under the fairness assumption
is based on the core principle that the only thing that matters is the ``bottom SCC'' the run ends up in,
as $\infinity(\rho)$ is almost always equal to such a component.
A $\poly$ algorithm is derived from this principle~\cite{SVV07}.
Crucially, this approach exploits the fact that a formula of the shape $\finf \varphi$ is \emph{prefix independent}, in the sense that any deviation made within a finite prefix of a run $\rho$ will not change $\infinity(\rho)$, and thus will not change its satisfaction of $\finf \varphi$.

\item In the prompt setting however, a formula of the shape $\fpinf \varphi$ is clearly \emph{not prefix independent},
as its satisfaction for a bound $k$ is impacted by whether finite prefixes contain faulty windows of length $k$ or not.
Thus, the fairness assumption does not allow us to restrict the analysis of prompt Muller formulas to bottom SCCs in the same way.
\end{itemize}


\section{Initialized systems}
\label{secInitSys}
We now reflect on the expressiveness of the prompt Muller fragment. 
The motivation to replace $\finf \varphi$ by $\fpinf \varphi$ was to reinforce the guarantees obtained by executions of the system:
By enforcing a strong notion of regularity in the occurrences of $\varphi$, we prevent a good event $\varphi$ from being seen infinitely often but with a vanishing frequency. 
Indeed, requiring $\fpinf \varphi$ is sufficient to imply a frequency for $\varphi$ of at least $1/k$ for some $k$.

We now argue that this requirement may be ``too strong'', as some systems may be rejected despite enforcing this kind of non-zero frequency guarantee on the occurrences of $\varphi$.
\begin{figure}[b]
    \begin{center}
  \begin{minipage}{.45\textwidth}
    \centering
    \vspace{-3mm}
      \begin{tikzpicture}[>=latex', join=bevel, thick, initial text =]
    \node[] (0) at (0bp, 0bp) [draw, circle, initial]{$a$};
    \node [below] at (0.south) {$A$};
    \node[] (1) at (60bp, 0bp) [draw, circle] {$b$};
    \node [below] at (1.south) {$B$};
    \draw[->,] (0) [] to node [below] {} (1);
    \draw[->,] (0) [loop above] to node [below] {} (0);
    \draw[->,] (1) [loop above] to node [below] {} (1);
  \end{tikzpicture}
    \end{minipage}
    \hfill
    \begin{minipage}{.45\textwidth}
    \centering
      \begin{tikzpicture}[>=latex', join=bevel, thick, initial text =]
    \node[] (1) at (0bp, 0bp) [draw, circle,initial] {$a$};
    \node [below] at (1.south) {$A$};    
    \node[] (2) at (60bp, 0bp) [draw, circle] {$b$};
    \node [below] at (2.south) {$B$};
    \draw[->,] (1) [bend right] to node [below] {} (2);
    \draw[->,] (2) [bend right] to node [below] {} (1);
    \draw[->,] (1) [loop above] to node [below] {} (1);
    \draw[->,] (2) [loop above] to node [below] {} (2);    
  \end{tikzpicture}
    \end{minipage}
  \end{center}
  \vspace{-3mm}
  \caption{\label{fig:exampleprefind} Two systems that satisfy $\finf A\lor \finf B$.
  The one on the left does not satisfy $\fpinf A\lor \fpinf B$ unless its ``initialization period'' is ignored. The rightmost one does not satisfy it either ways.}
\end{figure}

\begin{example}
 Consider the system presented on the left in \figurename~\ref{fig:exampleprefind}, of atomic propositions $a$ and $b$, and the prompt Muller formula $\varphi=\fpinf A\lor \fpinf B$.
 Despite satisfying the Muller formula $\finf A\lor \finf B$ (every run will either stay in $a$ forever or jump to $b$ and stay there forever), this system does not satisfy the prompt variant $\varphi$.
 However, in every run either $a$ or $b$ happen with a long-term average frequency of $1$.
 the key difference here, once again, is that the long-term average frequency of an event is a prefix independent notion, unlike $\varphi$.
\end{example}

\subsection{Towards prefix independence}
We argue that being unaffected by the addition of a prefix is a desirable property for a specification: in practice, a system might require a ``guarantee-less'' initialization period before reaching a steady regime where stronger conditions can be enforced.
We introduce a \emph{new fragment} of pLTL, capturing specifications that allow the system to pass the initialization period while enforcing prompt Muller guarantees on the regularity of good events eventually.

\noindent The intuition for this construction is based on the following reasoning:
\begin{itemize}
\item Coro.~\ref{cor:ufeq} means that the satisfaction of a formula in
$\fpLTL$ can always be proven false because of some finite path.
From that point of view, prompt Muller formulas behave like safety conditions $\g \alpha$,
as opposed to Muller formulas that behave like repeated reachability objectives.
This is the crux of what prevents prefix independence.
\item In order to allow a safety formula $\g \alpha$ to accommodate for an initialisation period in the system, a natural idea is to replace it with the formula $\f \g \alpha$, that is prefix independent.
\item Following this intuition, we introduce the \emph{initialized} variant of $\fpLTL$, that contains formulas of the shape $\f \varphi$, with $\varphi\in\fpLTL$.
\end{itemize}

\begin{example}
 Consider again the system on the left of \figurename~\ref{fig:exampleprefind}, and the prompt Muller formula $\varphi=\fpinf A\lor \fpinf B$.
 Despite not satisfying $\varphi$, this system does satisfy $\f\varphi$, as in every run $\rho$ either $\g a$ or $\g b$ eventually holds, and thus $(\rho,0)\models \f(\fpinf A\lor \fpinf B)$.
 Consider now the system on the right of \figurename~\ref{fig:exampleprefind}. There, the run $abaabbaaabbb\dots$ witnesses that neither $\varphi$ nor $\f\varphi$ are satisfied.
\end{example}

We must finally address a last point of detail. Even without promptness, a Muller formula in $\fLTL$ may not be prefix-independent: indeed, a state formula with no $\finf$ operators solely depends on the initial state. In more general terms, the presence of atomic propositions outside of the scope of an $\finf$ prevents prefix independence, and must be forbidden.
This is without loss of generality for $\fLTL$ formulas, as replacing such atomic propositions by true or false depending on the initial state of the system can be done as a pre-processing step~\cite{VV12}.

Formally, we introduce $\fpLTLplus$ as formulas where every atomic proposition is in the scope of an $\fpinf$ operator. They are generated by the nonterminal $\varphi$ in the following grammar: \begin{align*}
  \varphi ::= &
  \varphi \lor \varphi \mid
  \varphi \land \varphi \mid
  \fpinf \psi\\
  \psi ::= & \alpha \mid
  \lnot \alpha \mid
  \psi \lor \psi \mid
  \psi \land \psi \mid
  \fpinf \psi
  \ .
\end{align*}

For example, $\fpinf\sf{A}$ belongs to $\fpLTLplus$, but $\sf{B} \lor\fpinf\sf{A}$ does not.

\begin{definition}
  The \emph{initialized} fragment $\TfpLTL$ is defined as
  $\{\f \varphi \mid \varphi\in\fpLTLplus\}$.
\end{definition}

This fragment enjoys the property of prefix independence:

\begin{proposition}
If $\rho$ is a run and $\f\varphi$ is a formula in $\TfpLTL$, then 
for any $k\ge 0$ and $i\ge 0$, $(\rho,k) \models \f\varphi$ iff $(\rho[i..],k) \models \f\varphi$.
\end{proposition}

\subsection{Universal model checking}


    In this section, we show that the universal model checking problem remains in $\conp$ for $\TfpLTL$,
    by adapting the techniques we developed for $\fpLTL$.
    However, we note that the lower bound is lost in the process, as the reduction detailed in Lemma~\ref{lem:flathardness} does not play well with $\f \varphi$ formulas.    

\begin{restatable}{theorem}{TLTLnpcomp}\label{thm:TLTLnpcomp}
  The universal model checking problem for $\TfpLTL$ is in $\conp$.  
\end{restatable}
The idea is to make use of the same pumpings and witness structures as Section~\ref{secUniversal} to build a short witness for $\lts\not\models \f \varphi$, while
taking into account the fact that only long-term behaviours of the system should matter.
This means that a ``faulty window'' where a subformula $\psi$ does not hold can only serve as counter-example to $\f (\fpinf \psi)$ if it can be reached after an arbitrarily long prefix.
Intuitively, faulty windows that are reached infinitely often along a run fit that description, as any suffix of the run will eventually reach them.
We show that these are the only witnesses that we need, and that they remain small in size:
\begin{restatable}{proposition}{smallwitnessinitialised}\label{prop:smallwitnessinitialised}
  Given an LTS $\ltsFull$, a formula $\f\varphi\in\f(\fpLTLplus)$ and a bound $k\geq 0$, there exists a run
  $\rho\in\runs$ such that $(\rho,k)\not\models\f\varphi$ iff there exists a finite path
  $\pi$ and a function $\witness$ such that for all run $\pi\rho'$, $\witness$ is a witness for $\pi\rho'$ and $k$, $|\pi|\leq (k+1)|\varphi|(|\states|+1)$, $\max(\witness)< |\pi|$ and $\pi[0]$ is reachable
  from $\rho[0]$ and $\pi[|\pi|-1]$.    
\end{restatable}
This statement is very close to Prop.~\ref{prop:smallwitness},
the only difference is the last condition, that is $\pi[0]$ is reachable
  from $\rho[0]$ and $\pi[|\pi|-1]$, so that the finite
path $\pi$ can be repeated infinitely often, as part of a lasso $\pi_0(\pi\pi_1)^\omega$ for some finite paths $\pi_0$ and $\pi_1$.

\begin{proof}[Proof scheme]
    If $(\rho,k)\not\models\f\varphi$, then for every suffix of $\rho$ we can
 apply Prop.~\ref{prop:smallwitness} to get a short witness of $(\rho\sub{i}{},k)\not\models\varphi$.
 Note that these paths $\pi$ and mappings $\witness$ are all bounded in size by the same bounds on $k$ and $|\varphi|$,
 so that there are finitely many of them. 
 Eventually, we must visit the same state at two positions $i$ and $j$ far enough apart to enforce that the witness path $\pi$ for $(\rho\sub{i}{},k)\not\models\varphi$ ends before $j$.
 Thus, we get $\pi[0]$ is reachable from $\rho[0]$ and $\pi[|\pi|-1]$.
 
  The converse direction of the proof is straight-forward, as we can deduce from such a lasso-shaped witness run that $(\rho\sub{i}{},k)\not\models\varphi$ for infinitely many positions $i$.
  This implies $(\rho,k)\not\models\f\varphi$ by prefix-independence.
\end{proof}

The proof of Thm.~\ref{thm:TLTLnpcomp} is then obtained from
Prop.~\ref{prop:smallwitnessinitialised}.
More precisely, 
we use a variation of Algorithm~\ref{alg:promptunivMC},
that 
guesses $\pi$ and $\witness$ and
adds the extra reachability checks of Prop.~\ref{prop:smallwitnessinitialised}
over the end-points of $\pi$.
Notice that this extra step is polynomial 
and therefore is not
detrimental to membership in $\conp$.

\subsection{Fair model checking}

Going back to the bigger picture, we recall that in the non-prompt setting a polynomial procedure is obtained for the Muller fragment under fairness.
This is based on exploiting the prefix-independence property
on the one side, and the propensity of almost all runs to maximise the strongly connected sets of states they visit infinitely often on the other side.
As such, only runs visiting bottom SCCs (maximal strongly connected sets of states almost always reached by runs in probabilistic systems) are relevant under fairness.

The intuition is then that every such SCC is either entirely winning for a prefix-independent $\fLTL$ objective $\varphi$, regardless of what initial state is picked, or entirely losing.
Moreover, checking this fact for a given bottom SCC and a given formula is straightforward, as 
$\varphi$ satisfaction is entirely determined by $\infinity(\rho)$ on a given run $\rho$, and $\infinity(\rho)$ is almost always equal to the full bottom SCC by fairness.

For a $\TfpLTL$ formula, we follow the same recipe: bottom SCCs will also be eventually reached, and will either be entirely winning for a given $\varphi$, or entirely losing for $\varphi$. 
The main difference brought by promptness is that we can no longer simply rely on fairness to make sure that every state of the SCC is visited infinitely often: in order to guarantee \emph{prompt} visits,
we must assume the fair coin to be adversarial and make sure that every state of the SCC \emph{will be visited shortly} no matter what.
This results in a kind of attractor computation to check if a bottom SCC is winning or not,
that can then be paired with the algorithm from $\fLTL$.
This is the key to proving that indeed, 
prefix independence makes the fair model checking problem
for $\TfpLTL$ polynomial.


\begin{restatable}{theorem}{thmfairMC}\label{thm:fairMC}
  The fair model checking problem for $\TfpLTL$ 
  is in $\poly$.
\end{restatable}

\section{Discussion and conclusion}
\label{sec:discussion}
In this section we discuss our results, some immediate corollaries and
future work.  The first point we address might seem technical at a
first glance, but we believe that it is worth mentioning; it concerns the
quantification of the bound $k$ used in the semantics of
pLTL.
\paragraph{Strong against weak semantics}
Recall that pLTL formulas are evaluated with respect to a uniform
bound $k$, \textit{i.e.}~for a system to be a model to some formula, all its
runs are evaluated using the same bound $k$. Let us refer to this
semantics as \emph{strong}. One could want to relax this semantics and
define the following semantics: A system $\lts$ satisfies a formula
$\varphi$ if for every run $\rho$ there exists a bound $k$ such that
$(\rho,k)\models \varphi$. Let us call this semantics \emph{weak}. Indeed
one could argue that the strong semantics is too conservative and that
a system designer might be interested in a more permissive
behavior. This raises the following questions: \emph{are both
  semantics equivalent?}  Obviously, the strong semantics always
implies the weak one.  However, the converse does not hold.  To
separate these two semantics, consider the system depicted on the left of
Figure~\ref{fig:exampleprefind}, and the formula
$\varphi=\fpinf \sf{A}\lor\fpinf \sf{B}$.
This system satisfies the formula $\varphi$ with respect to the
weak semantics but not the strong one. This raises in turn another
natural question: \emph{Are these two semantics equivalent in some
  fragment?} We answer positively to this question by noticing that they
collapse for formulas in $\f(\fpLTLplus)$. This follows from
the following observation:
\begin{center}
  \emph{The weak and strong semantics are equivalent for formulas in
    $\varphi\in\fpLTL$ if they are evaluated over a strongly
    connected LTS}.
\end{center}
Indeed, if for every $k$ there exists a counter-example run $\rho_k$ so that $(\rho_k,k)\not\models\varphi$, then by our small witness property (Prop.~\ref{prop:smallwitness}) we can get paths $\pi_1,\pi_2,\dots$ that are sufficient to falsify $\varphi$ for $k=1,2,\dots$ respectively.
It is then sufficient to chain them one after the other -- by exploiting the strongly connected assumption -- to construct a single path $\rho$ that falsifies every bound $k$, as required by the weak semantics.
As a consequence, both semantics are equivalent in our prefix independent fragment $\TfpLTL$, where satisfaction for a run $\rho$ is determined by the long-term behavior of $\rho$, \textit{i.e.}~the suffixes that stay confined to the strongly connected set $\infinity(\rho)$.

\paragraph{Probabilistic model checking}
The last point we want to discuss is the complexity of quantitative
verification for the fragment $\f(\fpLTLplus)$. Once again, our
fragment behaves well. We argue that one can compute the \emph{satisfaction
  probability} of a system with respect to a formula in
$\f(\fpLTLplus)$ in polynomial time.
This nice property follows from the following \emph{zero-one-law} for the
bottom SCCs of an LTS $\lts$:
\begin{center}
  \emph{Let $B$ be a bottom SCC and $\varphi$ be a formula in $\fpLTLplus$, then
    either almost all the runs of $B$ satisfy $\varphi$ or almost no run
    of $B$ does.}
\end{center}
This is a consequence of Coro.~\ref{cor:ufeq}, as any finite path that witnesses the falsification of $\varphi$ in a bottom SCC will eventually be visited with probability 1.
Now to conclude, one has to notice that computing the probability of
reaching $B$ can be done in $\poly$, and use the prefix
independence of $\f(\fpLTLplus)$.

\paragraph{Future work} Finally, we hint at some future
directions. 
The fragment $\f(\fpLTLplus)$ turned out to behave well in the presence of fairness, leading to a tractable model checking procedure.
While, this is an improvement over the $\conp$ procedure for the  universal problem, we are still missing a matching lower bound to separate the two formally.
A more ambitious perspective is the study of the
controller synthesis problem induced by these fragments in 2-player games. 


\bibliography{biblio}

\newpage

\appendix


\section{Multi-pumpings and their properties}\label{app:multipump}

\pumpingnobetter*

In order to prove Prop.~\ref{prop:pumpingnobetter}.
we establish some structural properties of multi-pumpings. 

A multi-pumping of a run can be seen as pumping multiple loops of the same run, 
but with an ordering of the loops first. Indeed, loops are pumped in order, one after the other, and not one inside another.
This implies that merging two multi-pumpings is quite easy.
It suffices to respect the order of the loops, as illustrated in \figurename~\ref{fig:mergepumping}.
This is formalise by the following lemma.

  \begin{figure}[h!]
  \centering
    
\begin{tikzpicture}[thick]
    \node (A) at (-0.5,-0.05) {$\rho=$};
 \draw[-]        (0,0)   -- (5,0);
 \draw[->]        (5,0)   -- (6,0);
    \filldraw[black] (1,0) circle (1pt) node[anchor=south]{};
    \filldraw[black] (1.5,0) circle (1pt) node[anchor=south]{};
    \filldraw[black] (2,0) circle (1pt) node[anchor=south]{};
    \filldraw[black] (3,0) circle (1pt) node[anchor=south]{};
    \filldraw[black] (4,0) circle (1pt) node[anchor=south]{};
    \filldraw[black] (4.5,0) circle (1pt) node[anchor=south]{};
  
 \draw[-,color=red]        (1,-0.1)   -- (2,-0.1);  
 \draw[-,color=green!50!black]        (1.5,-0.2)   -- (3,-0.2);  
 \draw[-,color=blue]        (4,-0.1)   -- (4.5,-0.1);
    
    \node (B) at (-0.5,-1.05) {$\rho_1=$};
 \draw[-]        (0,-1)   -- (7.5,-1);
 \draw[->]        (7.5,-1)   -- (8.5,-1);
    \filldraw[black] (1,-1) circle (1pt) node[anchor=south]{};
    \filldraw[black] (2,-1) circle (1pt) node[anchor=south]{};
    \filldraw[black] (3,-1) circle (1pt) node[anchor=south]{};
    \filldraw[black] (3.5,-1) circle (1pt) node[anchor=south]{};
    \filldraw[black] (4,-1) circle (1pt) node[anchor=south]{};
    \filldraw[black] (5,-1) circle (1pt) node[anchor=south]{};
    \filldraw[black] (6,-1) circle (1pt) node[anchor=south]{};
    \filldraw[black] (6.5,-1) circle (1pt) node[anchor=south]{};
    \filldraw[black] (7,-1) circle (1pt) node[anchor=south]{};
    
 \draw[-,color=red,dashed]        (1,-1.1)   -- (3,-1.1);  
 \draw[-,color=red]        (3,-1.1)   -- (4,-1.1);  
 \draw[-,color=green!50!black]        (3.5,-1.2)   -- (5,-1.2);  
 \draw[-,color=blue,dashed]        (6,-1.1)   -- (6.5,-1.1); 
 \draw[-,color=blue]        (6.5,-1.1)   -- (7,-1.1);
 
     \node (C) at (-0.5,-2.05) {$\rho_2=$};
 \draw[-]        (0,-2)   -- (7.5,-2);
 \draw[->]        (7.5,-2)   -- (8.5,-2);
    \filldraw[black] (1,-2) circle (1pt) node[anchor=south]{};
    \filldraw[black] (1.5,-2) circle (1pt) node[anchor=south]{};
    \filldraw[black] (2,-2) circle (1pt) node[anchor=south]{};
    \filldraw[black] (3,-2) circle (1pt) node[anchor=south]{};
    \filldraw[black] (4.5,-2) circle (1pt) node[anchor=south]{};
    \filldraw[black] (5.5,-2) circle (1pt) node[anchor=south]{};
    \filldraw[black] (6,-2) circle (1pt) node[anchor=south]{};
    \filldraw[black] (6.5,-2) circle (1pt) node[anchor=south]{};
    \filldraw[black] (7,-2) circle (1pt) node[anchor=south]{};
  
 \draw[-,color=red]        (1,-2.1)   -- (2,-2.1);  
 \draw[-,color=green!50!black,dashed]        (1.5,-2.2)   -- (3,-2.2);  
 \draw[-,color=green!50!black]        (3,-2.2)   -- (4.5,-2.2);  
 \draw[-,color=blue,dashed]        (5.5,-2.1)   -- (6.5,-2.1);
 \draw[-,color=blue]        (6.5,-2.1)   -- (7,-2.1);
 
      \node (D) at (-0.5,-3) {$\rho'=$};
 \draw[-]        (0,-3)   -- (10,-3);
 \draw[->]        (10,-3)   -- (11,-3);
    \filldraw[black] (1,-3) circle (1pt) node[anchor=south]{};
    \filldraw[black] (2,-3) circle (1pt) node[anchor=south]{};
    \filldraw[black] (3,-3) circle (1pt) node[anchor=south]{};
    \filldraw[black] (3.5,-3) circle (1pt) node[anchor=south]{};
    \filldraw[black] (4,-3) circle (1pt) node[anchor=south]{};
    \filldraw[black] (5,-3) circle (1pt) node[anchor=south]{};
    \filldraw[black] (6.5,-3) circle (1pt) node[anchor=south]{};
    \filldraw[black] (7.5,-3) circle (1pt) node[anchor=south]{};
    \filldraw[black] (8,-3) circle (1pt) node[anchor=south]{};
    \filldraw[black] (8.5,-3) circle (1pt) node[anchor=south]{};
    \filldraw[black] (9,-3) circle (1pt) node[anchor=south]{};
    \filldraw[black] (9.5,-3) circle (1pt) node[anchor=south]{};
  
 \draw[-,color=red,dashed]        (1,-3.1)   -- (3,-3.1);  
 \draw[-,color=red]        (3,-3.1)   -- (4,-3.1);  
 \draw[-,color=green!50!black,dashed]        (3.5,-3.2)   -- (5,-3.2);  
 \draw[-,color=green!50!black]        (5,-3.2)   -- (6.5,-3.2);  
 \draw[-,color=blue,dashed]        (7.5,-3.1)   -- (9,-3.1);
 \draw[-,color=blue]        (9,-3.1)   -- (9.5,-3.1);

\end{tikzpicture}
    \caption{Illustration of the merging of two multi-pumpings, as per the construction of Lem.~\ref{lem:mergepumping}\label{fig:mergepumping}}
\end{figure}

\begin{lemma}\label{lem:mergepumping}
  Given an LTS $\lts$ and a run $\rho\in\runs$, let $\rho_1$ and $\rho_2$ be two multi-pumpings of~$\rho$.
  Then, there exists a multi-pumping $\rho'$ of $\rho$ such that $\rho'$ is also a multi-pumping of $\rho_1$ and of $\rho_2$.
\end{lemma}

The idea behind the proof is to do all the pumpings of $\rho_1$ and $\rho_2$ in the right order in order to merge the
two pumpings.

\begin{proof}
  As $\rho_1$ and $\rho_2$ are two multi-pumping of $\rho$, there are $n_1$ and $n_2$ such that $\rho_1$ is a $n_1$-pumping of $\rho$
  and $\rho_2$ is a $n_2$-pumping of $\rho$.
  Let us prove the result by induction over $n_1+n_2$.
  \begin{itemize}
    \item $n_1+n_2 =0$ : In that case $n_1=n_2=0$ and $\rho_1=\rho_2=\rho$ are trivial pumpings. Therefore the result trivially holds.
    \item Assume now that $n_1+n_2\geq 1$.
    If $n_1=0$, then $\rho_1=\rho$ and the result trivially holds.
    The same is true if $n_2=0$.
    Now assume that $n_1>0$ and $n_2>0$.
    Then by definition $\exists i_1<j_1, \rho[i_1]=\rho[j_1]$ such that there exists a $(n_1-1)$-pumping $\Tilde{\rho}_1$ of $\rho\sub{i_1}{}$ and $\rho_1=\rho\sub{}{i_1}\rho\sub{i_1}{j_1}^{l_1-1}\Tilde{\rho}_1$, where $l_1>0$.
    In the same way, $\exists i_2<j_2, \rho[i_2]=\rho[j_2]$ such that there exists a $(n_2-1)$-pumping $\Tilde{\rho}_2$ of $\rho\sub{i_2}{}$ and $\rho_2=\rho\sub{}{i_2}\rho\sub{i_2}{j_2}^{l_2-1}\Tilde{\rho}_2$, where $l_2>0$.
    Without loss of generality assume that $i_1\leq i_2$.
    By construction, as $i_1\leq i_2$, $\rho_2\sub{i_1}{}$ is a $n_2$-pumping of $\rho\sub{i_1}{}$, and $\Tilde{\rho}_1$ is by definition a $(n_1-1)$-pumping of $\rho\sub{i_1}{}$.
    By induction hypothesis, there exists a multi-pumping $\Tilde{\rho}'$ of $\rho\sub{i_1}{}$ that is also a multi-pumping of $\rho_2\sub{i_1}{}$ and $\Tilde{\rho}_1$.
    Then, by construction, the run $\rho'=\rho\sub{}{i_1}\rho\sub{i_1}{j_1}^{l_1-1}\Tilde{\rho}'$ is a multi-pumping of both $\rho_1$ and $\rho_2$.\qedhere
  \end{itemize}
\end{proof}

Note that the merging $\rho'$ is not unique.
For example, if $i_1=i_2$, one can do the pumping in any order and the resulting pumping will work, inducing two different multi-pumping.
In the following, when we refer to the merging of two multi-pumping, we fix an arbitrary one.

  \begin{figure}[h!]
  \centering
    \begin{tikzpicture}[thick]
    \node (A) at (-0.5,-0.05) {$\rho=$};
 \draw[-]        (0,0)   -- (3.5,0);
 \draw[->]        (3.5,0)   -- (4.5,0);
    \filldraw[black] (0,0) circle (1pt) node[anchor=south]{$a$};
    \filldraw[black] (0.5,0) circle (1pt) node[anchor=south]{$b$};
    \filldraw[black] (1,0) circle (1pt) node[anchor=south]{$c$};
    \filldraw[black] (1.5,0) circle (1pt) node[anchor=south]{$d$};
    \filldraw[black] (2,0) circle (1pt) node[anchor=south]{$e$};
    \filldraw[black] (2.5,0) circle (1pt) node[anchor=south]{$b$};
    \filldraw[black] (3,0) circle (1pt) node[anchor=south]{$d$};
    
 \draw[-,color=blue]        (0.5,-0.1)   -- (2.5,-0.1);

       \node (A) at (-0.5,-1) {$\rho'=$};
 \draw[-]        (0,-1)   -- (5.5,-1);
 \draw[->,dashed]        (5.5,-1)   -- (6.5,-1);
    \filldraw[black] (0,-1) circle (1pt) node[anchor=south]{$a$};
    \filldraw[black] (0.5,-1) circle (1pt) node[anchor=south]{$b$};
    \filldraw[black] (1,-1) circle (1pt) node[anchor=south]{$c$};
    \filldraw[black] (1.5,-1) circle (1pt) node[anchor=south]{$d$};
    \filldraw[black] (2,-1) circle (1pt) node[anchor=south]{$e$};
    \filldraw[black] (2.5,-1) circle (1pt) node[anchor=south]{$b$};
    \filldraw[black] (3,-1) circle (1pt) node[anchor=south]{$c$};
    \filldraw[black] (3.5,-1) circle (1pt) node[anchor=south]{$d$};
    \filldraw[black] (4,-1) circle (1pt) node[anchor=south]{$e$};
    \filldraw[black] (4.5,-1) circle (1pt) node[anchor=south]{$b$};
    \filldraw[black] (5,-1) circle (1pt) node[anchor=south]{$d$};
    
 \draw[-,color=blue,dashed]        (0.5,-1.1)   -- (2.5,-1.1);
 \draw[-,color=blue]        (2.5,-1.1)   -- (4.5,-1.1);

 \draw[-,dotted]        (3,-0.1)   -- (3,-0.7);
 
\end{tikzpicture}
    \caption{Graphical representation of 
    Lem.~\ref{lem:pumpingsameprefix}\label{fig:pumpingsameprefix}}
\end{figure}

\begin{lemma}\label{lem:pumpingsameprefix}
  Given an LTS $\lts$, a run $\rho\in\runs$ and $\rho'$ a multi-pumping of $\rho$, 
  let $i_0$ be the first index of a repeating state in
  $\rho$, that is, there exists a $0\leq i'< i_0$ such that $\rho[i']=\rho[i_0]$ and for all $0\leq j<j' <i_0$,
  $\rho[j]\neq\rho[j']$. Then, $\rho'\sub{0}{i_0+1}=\rho\sub{0}{i_0+1}$.
\end{lemma}

The intuitive idea behind this lemma is that a pumping can not modify the run before the first loop.
After this loop, the run and its pumping can differ, if for example the pumping is pumping that
particular loop, but before that point the two are the same.
For example, look at \figurename~\ref{fig:pumpingsameprefix}.
The first loop pumped in the pumping starts at the second state of the run, and the first difference between
the original run and its pumping happens right after the first reoccurring state.
Let us prove that formally.

\begin{proof}[Proof of Lem.~\ref{lem:pumpingsameprefix}]
  First, note that a multi-pumping is an $n$-pumping for some fixed $n$.
  By construction of a $n$-pumping by a sequence of $n$ distinct pumpings, it is enough to show the result
  for only one pumping. Then, by a trivial induction, it will hold for any multi-pumping.
  
  Assume that $\rho'$ is a pumping of $\rho$.
  Then, by definition, there are two positions $i_1<i_2$ such that $\rho[i_1]=\rho[i_2]$.
  Then, $\rho\sub{i_1}{i_2}$ is a loop, and
  there exists a $l>0$ such that $\rho'=\rho\sub{}{i_1}\rho\sub{i_1}{i_2}^l\rho[i_2..]$.
  By definition of $i_0$, we have that $i_0\leq i_2$.
  If $i_0< i_1$, then trivially $\rho'\sub{0}{i_0}\rho'[i_0]=\rho\sub{0}{i_0}\rho[i_0]$.
  If $i_1\leq i_0\leq i_2$, then
  $\rho'\sub{0}{i_0}\rho[i_0] = \rho\sub{0}{i_1}\rho\sub{i_1}{i_0}\rho[i_0]=\rho\sub{0}{i_0}\rho[i_0]$.
\end{proof}

  \begin{figure}[H]
  \vspace{5mm}
  \centering
    \begin{tikzpicture}[thick]

 \draw[-]        (0,0)   -- (1.9,0);
  \draw[-]        (2,0)   -- (6,0);
 \draw[->]        (6,0)   -- (7,0);

    \node (A) at (1,0.15) {$\pi$};
    \filldraw[black] (3,0) circle (1pt) node[anchor=south]{};
    \filldraw[black] (4,0) circle (1pt) node[anchor=south]{};

 \draw[-,color=blue]        (3,-0.1)   -- (4,-0.1);

  \draw[-]        (0,-1)   -- (1.9,-1);
  \draw[-]        (2,-1)   -- (6,-1);
 \draw[->]        (6,-1)   -- (7,-1);

    \node (A) at (1,-0.85) {$\pi$};
    \filldraw[black] (3,-1) circle (1pt) node[anchor=south]{};
    \filldraw[black] (4,-1) circle (1pt) node[anchor=south]{};
    \filldraw[black] (5,-1) circle (1pt) node[anchor=south]{};

 \draw[-,color=blue,dashed]        (3,-1.1)   -- (4,-1.1);  
 \draw[-,color=blue]        (4,-1.1)   -- (5,-1.1);  

\end{tikzpicture}
    \caption{Graphical representation of Lem.~\ref{lem:prefixpumping}\label{fig:prefixpumping}}
\end{figure}

\begin{lemma}\label{lem:prefixpumping}
  Given an LTS $\lts$, a finite path $\pi$, and two runs $\pi\rho\in\runs$, and $\pi\rho'\in\runs$
  such that $\rho'$ is a multi-pumping of $\rho$, then $\pi\rho'$ is a multi-pumping of $\pi\rho$.
  
\end{lemma}

The idea of this lemma is that adding a finite prefix from both a run and a multi-pumping
of this run will result in a new run and a multi-pumping of it. Intuitively, it makes sense as
adding a finite prefix on both does not change the loops of $\rho$ and how they are pumped in
$\rho'$, as illustrated by \figurename~\ref{fig:prefixpumping}. This is proven formally as follows.

\begin{proof}[Proof of Lem.~\ref{lem:prefixpumping}]
  First, note that a multi-pumping is a $n$-pumping for some finite $n$.
  By construction of a $n$-pumping by a sequence of $n$ distinct pumpings, it is enough to show the result
  for only one pumping. Then, by a trivial induction, it will hold for any multi-pumping.
  
  Assume that $\rho'$ is a pumping of $\rho$.
  Then there are two positions $i_1<i_2$ such that $\rho[i_1]=\rho[i_2]$ and $l>0$ such
  that $\rho'=\rho\sub{0}{i_1}\rho\sub{i_1}{i_2}^l\rho\sub{i_2}{}$.
  Then, $\pi\rho'=\pi\rho\sub{0}{i_1}\rho\sub{i_1}{i_2}^l\rho\sub{i_2}{}$, and by definition, $\pi\rho'$
  is a pumping of $\pi\rho$.
\end{proof}

\begin{lemma}\label{lem:pumpingdeviation}
  Given an lts $\lts$, a run $\rho\in\runs$, and a pumping $\rho'$ of $\rho$ such that $\rho'=\rho\sub{0}{i_1}\rho\sub{i_1}{i_2}^l\rho\sub{i_2}{}$, then
  one of the following holds for any position $j\geq 0$:
  \begin{itemize}
      \item $\rho'\sub{j}{}$ is a pumping of $\rho\sub{j}{}$.
      \item There is $i_1\leq j'\leq j$ such that $\rho'\sub{j}{}=\rho\sub{j'}{}$.
      \item There is a cycle $\pi$ and $i_1\leq j'\leq j$ such that $\rho'\sub{j}{}=\pi\rho\sub{j'}{}$.
  \end{itemize}
\end{lemma}

The idea of this lemma is that every position of a pumping can be matched to a position
of the original run. Depending on whether the position is before the added loop, in the middle of it or
afterwards, the structure is a bit different, but the matching still holds.
Mainly, this lemma is a tool used in other proofs.
As the proof is rather hermetic, we will explain the idea of this lemma with
some figures.

 \begin{figure}[H]
  \centering
  
\begin{tikzpicture}[thick]
    \node (A) at (-0.5,-0.05) {$\rho=$};
  \draw[->]        (0,0)   -- (10,0);
    \filldraw[black] (2,0) circle (1pt) node[anchor=south]{$i_1$};
    \filldraw[black] (4,0) circle (1pt) node[anchor=south]{$i_2$};
    
  \draw[-,color=blue]        (2,-0.1)   -- (4,-0.1);
  
    \node (B) at (-0.5,-2) {$\rho'=$};
  \draw[->]        (0,-2)   -- (10,-2);
    \filldraw[black] (2,-2) circle (1pt) node[anchor=south]{$i_1$};
    \filldraw[black] (4,-2) circle (1pt) node[anchor=south]{$i_2$};
    \filldraw[black] (6,-2) circle (1pt) node[anchor=south]{$ $};
    \filldraw[black] (8,-2) circle (1pt) node[anchor=south]{$ $};
    
  \draw[-,color=blue,dashed]        (2,-2.1)   -- (6,-2.1);
  \draw[-,color=blue]        (6,-2.1)   -- (8,-2.1);
    
    \filldraw[black] (1,-2) circle (1pt) node[anchor=south]{$j$};
    \filldraw[black] (1,0) circle (1pt) node[anchor=south]{$j$};
    \draw[->]        (1,-0.7)   -- (10,-0.7);
    \draw[->]        (1,-2.7)   -- (10,-2.7);
    \draw[dotted,-]  (1,0)   -- (1,-0.7);
    \draw[dotted,-]  (1,-2)   -- (1,-2.7);
    \node (C) at (0.4,-0.7) {$\rho[j..]=$};
    \node (D) at (0.37,-2.7) {$\rho'[j..]=$};
    
\end{tikzpicture}
  \caption{\label{fig:pumping1}First case : $j<i_1$.}
\end{figure}

Consider the run $\rho$ and the position $j$ describe in \figurename~\ref{fig:pumping1},
and the pumping $\rho'$ of $\rho$ where two loop $i_1\cdots i_2$ are added.
Clearly, $\rho\sub{}{j}=\rho'\sub{}{j}$ and thus $\rho'\sub{j}{}$ is a pumping of $\rho\sub{j}{}$.

 \begin{figure}[H]
  \centering
  \begin{tikzpicture}[thick]
    \node (A) at (-0.5,-0.05) {$\rho=$};
  \draw[->]        (0,0)   -- (10,0);
    \filldraw[black] (2,0) circle (1pt) node[anchor=south]{$i_1$};
    \filldraw[black] (4,0) circle (1pt) node[anchor=south]{$i_2$};
  
    \draw[-,color=blue]        (2,-0.1)   -- (4,-0.1);
  
    \node (B) at (-0.5,-2) {$\rho'=$};
  \draw[->]        (0,-2)   -- (10,-2);
    \filldraw[black] (2,-2) circle (1pt) node[anchor=south]{$i_1$};
    \filldraw[black] (4,-2) circle (1pt) node[anchor=south]{$i_2$};
    \filldraw[black] (6,-2) circle (1pt) node[anchor=south]{};
    \filldraw[black] (8,-2) circle (1pt) node[anchor=south]{};
    
  \draw[-,color=blue,dashed]        (2,-2.1)   -- (6,-2.1);
  \draw[-,color=blue]        (6,-2.1)   -- (8,-2.1);
    
    \filldraw[black] (9,-2) circle (1pt) node[anchor=south]{$j$};
    \filldraw[black] (5,0) circle (1pt) node[anchor=south]{$j'$};
    \draw[->]        (5,-0.7)   -- (10,-0.7);
    \draw[->]        (9,-2.7)   -- (10,-2.7);
    \draw[dotted,-]  (5,0)   -- (5,-0.7);
    \draw[dotted,-]  (9,-2)   -- (9,-2.7);
    \node (C) at (4.37,-0.7) {$\rho[j'..]=$};
    \node (D) at (8.37,-2.7) {$\rho'[j..]=$};
    
\end{tikzpicture}
  \caption{\label{fig:pumping2}Second case : $j>i_2+(i_2-i_1)(l-1)$.}
\end{figure}

Now, consider the same run and same pumping, but consider that $j>i_2+(i_2-i_1)(l-1)$, as represented
in \figurename~\ref{fig:pumping2}. Here, two loops are added, therefore $l=2$.
As $j$ is after every added loops in $\rho'$, one can find a $j'$ such that
$\rho'\sub{j}{}=\rho\sub{j'}{}$. Note that if $j$ is in the last loop $i_1\cdots i_2$, then
$j>i_2+(i_2-i_1)(l-1)$ and this idea still works.

 \begin{figure}[H]
 \vspace{-5mm}
  \centering
  \begin{tikzpicture}[thick]
    \node (A) at (-0.5,-0.05) {$\rho=$};
  \draw[->]        (0,0)   -- (10,0);
    \filldraw[black] (2,0) circle (1pt) node[anchor=south]{$i_1$};
    \filldraw[black] (4,0) circle (1pt) node[anchor=south]{$i_2$};
  
    \draw[-,color=blue]        (2,-0.1)   -- (4,-0.1);
  
    \node (B) at (-0.5,-2) {$\rho'=$};
  \draw[->]        (0,-2)   -- (10,-2);
    \filldraw[black] (2,-2) circle (1pt) node[anchor=south]{$i_1$};
    \filldraw[black] (4,-2) circle (1pt) node[anchor=south]{$i_2$};
    \filldraw[black] (6,-2) circle (1pt) node[anchor=south]{};
    \filldraw[black] (8,-2) circle (1pt) node[anchor=south]{};
    
  \draw[-,color=blue,dashed]        (2,-2.1)   -- (6,-2.1);
  \draw[-,color=blue]        (6,-2.1)   -- (8,-2.1);
    
    \filldraw[black] (5,-2) circle (1pt) node[anchor=south]{$j$};
    \filldraw[black] (7,-2) circle (1pt) node[anchor=south]{$\Tilde{j}$};
    \filldraw[black] (3,0) circle (1pt) node[anchor=south]{$j'$};
    \draw[->]        (3,-0.7)   -- (10,-0.7);
    \draw[->]        (5,-2.7)   -- (10,-2.7);
    \draw[dotted,-]  (3,0)   -- (3,-0.7);
    \draw[dotted,-]  (5,-2)   -- (5,-2.7);
    \draw[dotted,-]  (7,-2)   -- (7,-2.7);
    \node (C) at (2.4,-0.7) {$\rho[j'..]=$};
    \node (D) at (4.37,-2.7) {$\rho'[j..]=$};
    \node (E) at (6,-3) {$\pi$};
    \node (F) at (8.5,-3) {$\rho'[\Tilde{j}..]=\rho[j'..]$};
    
\end{tikzpicture}
  \caption{\label{fig:pumping3}Third case : $i_1\leq j\leq i_2+(i_2-i_1)(l-1)$}
\end{figure}

The last case is the most technical. Consider the same run $\rho$ and same pumping $\rho'$, but
with $i_1\leq j\leq i_2+(i_2-i_1)(l-1)$, that is, $j$ is somewhere inside an added loop,
as described in \figurename~\ref{fig:pumping3}.
Then, $\rho'\sub{j}{}$ may not be a pumping of any $\rho\sub{j'}{}$, as $\rho'\sub{j}{}$ starts with a bit
of the loop, then has some complete occurrences of the loop, and then continues in the same way
as $\rho$. Yet, one can consider the last occurrence of $\rho'[j]$, that is, the same vertex but
in the last loop, let say $\rho'[\Tilde{j}]$. Then, $\rho'\sub{j}{}$ can be described as a
loop $\pi=\rho'\sub{j}{\Tilde{j}}$ followed by $\rho'\sub{\Tilde{j}}{}$. Then, by the same argument as
in the second case, there exists a $j'$ such that $\rho\sub{j'}{}=\rho'\sub{\Tilde{j}}{}$.

Those ideas are formalised in the following proof.

\begin{proof}[Proof of Lem.~\ref{lem:pumpingdeviation}]
  Let us look at every possibility.
  
  Firstly, assume that $j< i_1$. Then, $\rho'\sub{j}{}=\rho\sub{j}{i_1}\rho\sub{i_1}{i_2}^l\rho\sub{i_2}{}$.
  By definition, $\rho\sub{j}{}$ is a pumping of $\rho\sub{j}{}$.
  
  Secondly, assume that $j> i_2+(i_2-i_1)(l-1)$. Then, let $j'=j-(i_2-i_1)(l-1)$.
  We have $j'>i_2$, and thus by construction $\rho'\sub{j}{}=\rho\sub{j'}{}$.
  
  Thirdly, we have $i_1\leq j\leq i_2+(i_2-i_1)(l-1)$. There is a $0\leq l'\leq l-2$ such that
  \[i_1+(i_2-i_1)l'\leq j\leq i_1+(i_2-i_1)(l'+1).\]
  Let $j'=j-(i_2-i_1)l'$, and let us write
  \[\rho'=\rho\sub{0}{i_1}\rho\sub{i_1}{i_2}^{l'}\rho\sub{i_1}{j'}\rho\sub{j'}{i_2}\rho\sub{i_1}{i_2}^{l-l'-2}\rho\sub{i_1}{j'}\rho\sub{j'}{i_2}\rho\sub{i_2}{}\,.\]
  Note that
  $\rho'\sub{j}{} =\rho\sub{j'}{i_2}\rho\sub{i_1}{i_2}^{l-l'-2}\rho\sub{i_1}{j'}\rho\sub{j'}{i_2}\rho\sub{i_2}{}$.
  
\noindent Let $\pi=\rho\sub{j'}{i_2}\rho\sub{i_1}{i_2}^{l-l'-2}\rho\sub{i_1}{j'}$.
  Note that $\pi[0]=\rho[j']\in\successor(\pi[|\pi|-1])$ and $\pi$ is a cycle.
  Therefore, $\rho'[j..]=\pi\rho\sub{j'}{}$.
\end{proof}

Now, we show some properties of pumpings in regard to the satisfaction of formulas.
Those properties are necessary in order to use pumpings and still keep satisfaction, or invalidation,
of the formula.

  \begin{figure}[H]
  \centering
    \begin{tikzpicture}[thick]

 \draw[-]        (0,0)   -- (1.9,0);
  \draw[-]        (2,0)   -- (6,0);
 \draw[->]        (6,0)   -- (7,0);

    \node (A) at (1,0.15) {$\pi$};
    \filldraw[black] (3,0) circle (1pt) node[anchor=south]{};
    \filldraw[black] (5,0) circle (1pt) node[anchor=south]{};

 \draw[<->,dashed]        (3,0.1)   -- (5,0.1);
    \node (B) at (4,0.3) {$k$};
    
    \draw [decorate,decoration={brace,amplitude=5pt,mirror,raise=1ex}]
  (3,0) -- (5,0) node[midway,yshift=-2em]{$\not\models\psi$};

 \draw[-,dotted]        (2,0.1)   -- (2,-1.1);
    \node (C) at (2.8,-1.1) {$\not\models\fpinf\psi$};
    
     \draw[-,dotted]        (0,0.1)   -- (0,-1.6);
    \node (D) at (0.8,-1.6) {$\not\models\fpinf\psi$};

\end{tikzpicture}
    \caption{Visual representation of Lem.~\ref{lem:pumpingloopstart}\label{fig:pumpingloopstart}}
\end{figure}

\begin{lemma}\label{lem:pumpingloopstart}
  Given an LTS $\lts$, a finite path $\pi$, a run $\rho=\pi\rho'\in\runs$, a bound $k\geq 0$, and
  a formula $\varphi\in\fpLTL$,
  if $\rho'[0]=\pi[0]$ and $(\rho',k)\not\models\varphi$, then $(\rho,k)\not\models\varphi$.
\end{lemma}

Note that here, $\pi$ is a loop.
In this lemma, the notion of pumping is hidden.
Mostly, when this lemma is used, the finite path $\pi$ is some loop in a pumping.
Intuitively, the idea is that adding a finite prefix to a run will not remove any faulty window,
just postpone them, as can be seen in \figurename~\ref{fig:pumpingloopstart}. For state formulas, we have to check that the initial state stays the same.
Over all, this result is not surprising as prompt Muller formulas can be seen as safety conditions.

\begin{proof}[Proof of Lem.~\ref{lem:pumpingloopstart}]
    We show the result by structural induction over the formula $\varphi$.
    \begin{itemize}
        \item If $\varphi=\alpha$ or $\varphi=\lnot\alpha$, then $(\rho',k)\not\models\varphi$
        is equivalent to $\rho'[0]\not\models\varphi$.
        As $\pi[0]=\rho'[0]$, $(\rho,k)\not\models\varphi$.
        \item If $\varphi=\psi_1\lor\psi_2$, then, $(\rho',k)\not\models\varphi$ is equivalent to
        $(\rho',k)\not\models\psi_1$ and $(\rho',k)\not\models\psi_2$.
        By induction hypothesis, this implies that $(\rho,k)\not\models\psi_1$ and
        $(\rho,k)\not\models\psi_2$, and therefore $(\rho,k)\not\models\varphi$.
        \item If $\varphi=\psi_1\land\psi_2$, then, $(\rho',k)\not\models\varphi$ is equivalent to
        $(\rho',k)\not\models\psi_1$ or $(\rho',k)\not\models\psi_2$. Without loss of generality,
        assume that $(\rho',k)\not\models\psi_1$.
        By induction hypothesis, this implies that $(\rho,k)\not\models\psi_1$.
        Therefore, $(\rho,k)\not\models\varphi$.
        \item If $\varphi=\fpinf\psi$, then $(\rho',k)\not\models\varphi$ implies that there exists a
        position $i$ such that for all $i\leq j\leq i+k$, $(\rho'\sub{j}{},k)\not\models\psi$.
        Therefore, $(\rho\sub{(|\pi|+j)}{},k)\not\models\psi$.
        This holds for each $i\leq j\leq i+k$, and therefore $(\rho,k)\not\models\varphi$.\qedhere
    \end{itemize}
\end{proof}

  \begin{figure}[H]
  \centering
    \begin{tikzpicture}[thick]
    
 \draw[-]        (0,0)   -- (1.9,0);
  \draw[-]        (2,0)   -- (7,0);
 \draw[->]        (7,0)   -- (8,0);

    \node (A) at (1,0.15) {$\pi$};

    \draw [decorate,decoration={brace,amplitude=5pt,mirror,raise=1ex}]
  (0,0) -- (1.9,0) node[midway,yshift=-2em]{$\not\models\psi$};

   \draw[-]        (0,-2)   -- (5.9,-2);
  \draw[-]        (6,-2)   -- (7,-2);
 \draw[->]        (7,-2)   -- (8,-2);

    \filldraw[black] (2,-2) circle (1pt) node[anchor=south]{};
    \filldraw[black] (4,-2) circle (1pt) node[anchor=south]{};
  
    \node (B) at (1,-1.85) {$\pi$};
    \node (C) at (3,-1.85) {$\pi$};
    \node (D) at (5,-1.85) {$\pi$};
    
        \draw [decorate,decoration={brace,amplitude=5pt,mirror,raise=1ex}]
  (0,-2) -- (5.9,-2) node[midway,yshift=-2em]{$\not\models\psi$};

\end{tikzpicture}
    \caption{Visual representation of Lem.~\ref{lem:pumpingloopstartfull}\label{fig:pumpingloopstartfull}}
\end{figure}

\begin{lemma}\label{lem:pumpingloopstartfull}
Given an LTS $\lts$, a finite path $\pi$, a run $\rho=\pi\rho'\in\runs$ such that $\pi[0]=\rho'[0]$,
a bound $k\geq 0$, a formula $\varphi$, and the run $\rho_l=\pi^l\rho'$ for $l>0$,
  if for all $0\leq j\leq |\pi|$, $(\rho[j..],k)\not\models\varphi$, then for all $0\leq j'\leq l\times|\pi|$, $(\rho_l[j..],k)\not\models\varphi$.
\end{lemma}

This lemma is quite technical and used only as a tool in other proofs.
The idea is that if a loop does not satisfy a formula, no iteration of that loop can satisfy the formula, as
represented in \figurename~\ref{fig:pumpingloopstartfull}.
It is a direct corollary of Lem.~\ref{lem:pumpingloopstart}, as formally shown in the following proof.

\begin{proof}[Proof of Lem.~\ref{lem:pumpingloopstartfull}]
  First, notice that $\pi[0]=\rho'[0]$ ensures that $\pi^l\rho'$ is indeed a valid run in the system.
  Now, consider $0\leq j'\leq l\times|\pi|$, and $j=j'$ modulo $|\pi|$.
  As $\rho_l\sub{((l-1)\times|\pi|)}{}=\rho$ by definition of $\rho_l$, we have that
  $\rho_l\sub{((l-1)\times|\pi|+j)}{}=\rho[j..]$.
  Moreover, by definition, $0\leq j\leq |\pi|$, and $\rho_l[j']=\rho[j]$.
  By noting that $\rho_l\sub{j'}{}=\rho_l\sub{j'}{((l-1)\times|\pi|+j)}\rho\sub{j}{}$ and
  $(\rho\sub{j}{},k)\not\models\varphi$, Lem.~\ref{lem:pumpingloopstart} is enough to conclude.
\end{proof}

We have now presented every tool needed to prove Prop.~\ref{prop:pumpingnobetter}

\begin{proof}[Proof of Prop.~\ref{prop:pumpingnobetter}]

  By definition, a multi-pumping is a $n$-pumping for some $n\geq 0$, that is, a multi-pumping is a succession of
  a finite number of pumpings.
  Therefore, to show the result for $\rho'$ a pumping of $\rho$ is enough to conclude with a structural induction.
  
  Let us prove by structural induction over $\varphi$ that if $\rho'$ is a pumping of $\rho$, then
  $(\rho,k)\not\models\varphi$ implies that $(\rho',k)\not\models\varphi$.
  \begin{itemize}
    \item $\varphi=\alpha$ or $\varphi=\lnot\alpha$ : Then the result trivially holds as $\rho[0]=\rho'[0]$ by construction.
    \item $\varphi=\psi_1\lor\psi_2$ : By definition, $(\rho,k)\not\models\varphi$ if $(\rho,k)\not\models\psi_1$ and $(\rho,k)\not\models\psi_2$.
    By induction hypothesis, this implies that $(\rho',k)\not\models\psi_1$ and $(\rho',k)\not\models\psi_2$, and therefore $(\rho',k)\not\models\varphi$.
    \item $\varphi=\psi_1\land\psi_2$ : By definition, $(\rho,k)\not\models\varphi$ if $(\rho,k)\not\models\psi_1$ or $(\rho,k)\not\models\psi_2$.
    Without loss of generality assume that $(\rho,k)\not\models\psi_1$.
    By induction hypothesis, this implies that $(\rho',k)\not\models\psi_1$ and therefore $(\rho',k)\not\models\varphi$.
    \item $\varphi=\fpinf\psi$ :
    By definition, there exists a position $i\geq 0$ such that for all $i\leq j\leq i+k$,
    $\rho\sub{j}{}\not\models\psi$.
    Write $\rho'=\rho\sub{0}{i_1}\rho\sub{i_1}{i_2}^l\rho\sub{i_2}{}$ with $l>0$ and $i_1<i_2$.
    Firstly, if $i_1<i$ then for all $ i\leq j \leq i+k$,
    $\rho'\sub{(j+(i_2 - i_1)(l-1))}{}=\rho\sub{j}{}$.
    Therefore for all $ i+(i_2 - i_1)(l-1)\leq j\leq i+(i_2 - i_1)(l-1)+k $,
    $\rho'\sub{j}{}\not\models\psi$.
    Secondly, we have $i\leq i_1$.
    Let us show that $\forall i\leq j\leq i+k$, $\rho'\sub{j}{}\not\models\psi$.
    By Lem.~\ref{lem:pumpingdeviation}, there are three cases.
    \begin{itemize}
      \item If $\rho'\sub{j}{}$ is a pumping of $\rho\sub{j}{}$, then by induction hypothesis $\rho'\sub{j}{}\not\models \psi$.
      \item If there exists $i_1\leq j'\leq j$ such that $\rho'\sub{j}{}=\rho\sub{j'}{}$ then,
      as $i\leq i_1$, we have that $i\leq j'\leq i+k$, so that $\rho\sub{j'}{}\not\models \psi$, and thus $\rho'\sub{j}{}\not\models \psi$.
      \item If there exists a cycle $\pi$ and $i_1\leq j'\leq j$ such that $\rho'\sub{j}{}=\pi\rho\sub{j'}{}$, then once again $i\leq j'\leq i+k$, so that $\rho\sub{j'}{}\not\models \psi$, and by Lem.~\ref{lem:pumpingloopstart} $\rho'\sub{j}{}\not\models \psi$.\qedhere
  \end{itemize}
  \end{itemize}
\end{proof}

\pboundgeneral*


\begin{proof}
  By structural induction over $\varphi$.
  \begin{itemize}
    \item $\varphi=\alpha$ or $\varphi=\neg\alpha$: The result trivially holds
    as the bound is irrelevant to the satisfaction of $\varphi$, and $\rho_N$ is a multi-pumping of itself.
    \item $\varphi=\psi_1\lor\psi_2$ : By definition, $(\rho_N,N)\not\models\varphi$
    if $(\rho_N,N)\not\models\psi_1$ and $(\rho_N,N)\not\models\psi_2$.
    By induction hypothesis, for all $k\geq N$, there are two multi-pumpings $\rho_k^1$ and $\rho_k^2$ of $\rho_N$ such that
    $(\rho_k^1,k)\not\models\psi_1$ and $(\rho_k^2,k)\not\models\psi_2$.
    By lemma \ref{lem:mergepumping}, there exists a multi-pumping $\rho_k$ of $\rho_N$ such that $\rho_k$ is also a
    multi-pumping of $\rho_k^1$ and $\rho_k^2$.
    Then, by lemma \ref{prop:pumpingnobetter}, $(\rho_k,k)\not\models\psi_1$ and $(\rho_k,k)\not\models\psi_2$.
    Therefore, $(\rho_k,k)\not\models\psi_1\lor\psi_2$.
    \item $\varphi=\psi_1\land\psi_2$ : By definition, $(\rho_N,N)\not\models\varphi$
    if $(\rho_N,N)\not\models\psi_1$ or $(\rho_N,N)\not\models\psi_2$.
    Without loss of generality, suppose that $(\rho_N,N)\not\models\psi_1$.
    By induction hypothesis, for all $k\geq N$, there exists a multi-pumping $\rho_k$ of $\rho_N$ such that
    $(\rho_k,k)\not\models\psi_1$.
    Then, by definition, $(\rho_k,k)\not\models\psi_1\land\psi_2$.
    \item $\varphi=\fpinf\psi$ : By definition, if $(\rho_N,N)\not\models\varphi$ then
    $\exists i, \forall 0\leq j \leq N, (\rho_N\sub{(i+j)}{},N)\not\models\psi$.
    As $N>|\states|$, there are two positions $i_1, i_2$ with $i\leq i_1<i_2\leq i+N$ such that $\rho_N[i_1]=\rho_N[i_2]$.
    Moreover, one can assume w.l.o.g. that $i_1$ and $i_2$ are chosen such that the finite path $\rho_N\sub{i_1}{i_2}$ is such that for each $i_1\leq j_1< j_2\leq i_2-1$, $\rho_N[j_1]\neq\rho_N[j_2]$.
    For each $i_1\leq j\leq i_2$, we have $(\rho_N\sub{j}{},N)\not\models\psi$.
    For each $i_1\leq j\leq i_2$, we can apply the induction hypothesis to obtain 
    a multi-pumping $\rho_k^j$ of $\rho_N\sub{j}{}$ such that $(\rho_k^j,k)\not\models\psi$.
    By Lem.~\ref{lem:prefixpumping}, for each $i_1\leq j\leq i_2$, we have that
    $\rho_N\sub{i_1}{j}\rho_k^j$ is a multi-pumping of $\rho_N[i_1..]$.
    By applying Lem.~\ref{lem:mergepumping} multiple times, there exists a run $\Tilde{\rho}_k$ such that for
    each $i_1\leq j\leq i_2$, $\Tilde{\rho}_k$ is a multi-pumping of $\rho_N\sub{i_1}{j}\rho_k^j$.
    Moreover, observe that it is not possible to have a loop between $i_1$ and $i_2$ in $\rho$. Therefore, by construction,
    for each $i_1\leq j\leq i_2$, we have that $\Tilde{\rho}_k\sub{(j-i_1)}{}$ is a
    multi-pumping of $\rho_k^j$ and therefore, by Prop.~\ref{prop:pumpingnobetter},
    $(\Tilde{\rho}_k\sub{(j-i_1)}{},k)\not\models\psi$.
    Consider the run $\Tilde{\rho}'_k=\Tilde{\rho}_k\sub{0}{(i_2-i_1)}^k\Tilde{\rho}_k\sub{(i_2-i_1)}{}$.
    By Lem.~\ref{lem:pumpingloopstartfull}, for all $0\leq j\leq k\times (i_2-i_1)$,
    $(\Tilde{\rho}'_k\sub{j}{},k)\not\models\psi$.
    Consider the run $\rho_k=\rho_N\sub{0}{i_1}]\Tilde{\rho}_k$.
    Then, as $i_2>i_1$, $k\times (i_2-i_1)>k$, for each $0\leq j\leq k$,
    $\rho_k\sub{(i_1+j)}{}=\Tilde{\rho}'_k\sub{j}{}$, and therefore $(\rho_k\sub{(i_1+j)}{},k)\not\models\psi$.
    By definition, $(\rho_k,k)\not\models\fpinf\psi$.\qedhere
  \end{itemize}
\end{proof}

\section{Small witness property}\label{app:smallwitness}

\smallwitness*

The proof of this proposition follows from the following lemmas.

\begin{lemma}\label{lm:witnesscorrectness}
Given an LTS $\lts$, a formula $\varphi\in\fpLTL$, a run $\rho\in\runs$ and a bound $k\geq 0$, if 
 $\witness$ is a witness
for $\rho$ with bound $k$, then for each $\psi\in\subform(\varphi)$, for each $i\in\witness(\psi)$,
$(\rho\sub{i}{},k)\not\models\psi$.
\end{lemma}

\begin{proof}
Let us prove this by structural induction over the formula $\varphi$.
\begin{itemize}
    \item If $\varphi=\alpha$, then for each $i\in\witness(\alpha)$, by definition, $\alpha\not\in\lbl(\rho[i])$.
    \item If $\varphi=\lnot\alpha$,
    then for each $i\in\witness(\lnot\alpha)$, by definition, $\alpha\in\lbl(\rho[i])$.
    \item If $\varphi=\psi_1\lor\psi_2$, consider a subformula $\psi$ of $\varphi$.
    Then assume that $\psi$ is a subformula of either $\psi_1$ or $\psi_2$, and by induction hypothesis,
    for each $i\in\witness(\psi)$, $(\rho\sub{i}{},k)\not\models\psi$.
    If $\psi$ is not a subformula of $\psi_1$ nor $\psi_2$, then $\psi=\varphi$.
    Let $i\in\witness(\psi)$.
    Since $i\in\witness(\psi_1)\cap\witness(\psi_2)$, then by induction
    hypothesis, $(\rho\sub{i}{},k)\not\models\psi_1$ and $(\rho\sub{i}{},k)\not\models\psi_2$.
    Thus, $(\rho\sub{i}{},k)\not\models\psi$.
    \item If $\varphi=\psi_1\land\psi_2$, consider a subformula $\psi$ of $\varphi$.
    Then assume that $\psi$ is a subformula of either $\psi_1$ or $\psi_2$, and by induction hypothesis,
    for each $i\in\witness(\psi)$, $(\rho\sub{i}{},k)\not\models\psi$.
    If $\psi$ is not a subformula of $\psi_1$ nor $\psi_2$, then $\psi=\varphi$.
    Let $i\in\witness(\psi)$.
    Since $i\in\witness(\psi_1)\cup\witness(\psi_2)$, 
    without loss of generality, assume $i\in\witness(\psi_1)$, then by induction
    hypothesis, $(\rho\sub{i}{},k)\not\models\psi_1$.
    Thus, $(\rho\sub{i}{},k)\not\models\psi$.
    \item If $\varphi=\fpinf\psi_1$, consider a subformula $\psi$ of $\varphi$.
    Then assume that $\psi$ is a subformula of $\psi_1$, by induction hypothesis,
    for each $i\in\witness(\psi)$, $(\rho\sub{i}{},k)\not\models\psi$.
    If $\psi$ is not a subformula of $\psi_1$, then $\psi=\varphi$.
    Let $i\in\witness(\psi)$.
    There is a $i'\geq i$ such that for all $i'\leq j\leq i'+k$,
    $j\in\witness(\psi_1)$.
    By induction hypothesis, for all $i'\leq j\leq i'+k$, $(\rho\sub{j}{},k)\not\models\psi_1$.
    Thus, $(\rho\sub{i}{},k)\not\models\psi$.\qedhere
\end{itemize}
\end{proof}

\begin{restatable}{lemma}{witnessconstruction}\label{lm:witnessconstruction}
Given an LTS $\ltsFull$, a formula $\varphi\in\fpLTL$, a run $\rho\in\runs$ 
and a bound $k\geq 0$, if $(\rho,k)\not\models\varphi$, then
there exists a finite path
  $\pi$ and a function $\witness$ such that for all run $\pi\rho'$, $\witness$ is a witness for $\pi\rho'$ with bound
  $k$ and $\max(\witness)< |\pi|$ and for each 
  $\psi\in\subform(\varphi)$, $|\witness(\psi)|\leq k+1$.    
\end{restatable}

\begin{proof}
Assume that $(\rho,k)\not\models\varphi$, and let us construct a witness function $\witness_{\rho,k}$,
such that for all $\psi\in\subform(\varphi)$, $|\witness_{\rho,k}(\psi)|\leq k+1$.
We will build the witness function top down over the syntax tree of $\varphi$.
We also
prove that this construction maintains an inductive invariant.

First, we define $\witness_{\rho,k}(\varphi)=\{0\}$. Note that for each $i\in\witness_{\rho,k}(\varphi)$,
that is $i=0$, we have $(\rho\sub{i}{},k)\not\models\varphi$.
Second, we assume that $\psi\in\subform(\varphi)$ is a subformula of $\varphi$ such that
$\witness_{\rho,k}(\psi)$ is defined,
$|\witness_{\rho,k}(\psi)|\leq k+1$
and for each $i\in\witness_{\rho,k}(\psi)$, $(\rho\sub{i}{},k)\not\models\psi$.
Then, let us explain how to define $\witness_{\rho,k}$ for the daughter subformulas of $\psi$ in
the syntax tree of $\varphi$.
Consider the form of $\psi$.
\begin{itemize}
    \item If $\psi$ is an atomic proposition $\alpha$ or $\lnot\alpha$, then we are done as $\psi$ has no
    non-trivial subformula.
    \item If $\psi=\psi_1\lor\psi_2$, then define
    $\witness_{\rho,k}(\psi_1)=\witness_{\rho,k}(\psi_2)=\witness_{\rho,k}(\psi)$
    Then, for each $i\in\witness_{\rho,k}(\psi_1)$, $i\in\witness_{\rho,k}(\psi)$.
    By induction hypothesis, $(\rho\sub{i}{},k)\not\models\psi$, and therefore, $(\rho\sub{i}{},k)\not\models\psi_1$.
    Similarly, $(\rho\sub{i}{},k)\not\models\psi_2$.
    Moreover, $\witness_{\rho,k}$ satisfies
    $\witness_{\rho,k}(\psi)=\witness_{\rho,k}(\psi_1)\cap\witness_{\rho,k}(\psi_2)$.
    Finally, as $|\witness_{\rho,k}(\psi)|\leq k+1$, $|\witness_{\rho,k}(\psi_1)|\leq k+1$ and
    $|\witness_{\rho,k}(\psi_2)|\leq k+1$.
    \item If $\psi=\psi_1\land\psi_2$, then define 
    $\witness_{\rho,k}(\psi_1)=\{i\in\witness_{\rho,k}(\psi)\mid (\rho[i..],k)\not\models\psi_1\}$
    and 
    $\witness_{\rho,k}(\psi_2)=\{i\in\witness_{\rho,k}(\psi)\mid (\rho\sub{i}{},k)\not\models\psi_2\}$.
    For each $i\in\witness_{\rho,k}(\psi)$, by induction hypothesis, $(\rho\sub{i}{},k)\not\models\psi$,
    and therefore, either $(\rho\sub{i}{},k)\not\models\psi_1$, and $i\in\witness_{\rho,k}(\psi_1)$
    or $(\rho\sub{i}{},k)\not\models\psi_2$, and $i\in\witness_{\rho,k}(\psi_2)$.
    We can conclude that $\witness_{\rho,k}(\psi)=\witness_{\rho,k}(\psi_1)\cup \witness_{\rho,k}(\psi_2)$,
    and that for each $i\in\witness_{\rho,k}(\psi_1)$ (resp. $\psi_2$),
    $(\rho\sub{i}{},k)\not\models\psi_1$ (resp. $(\rho\sub{i}{},k)\not\models\psi_2$).
    Finally, as $|\witness_{\rho,k}(\psi)|\leq k+1$, $|\witness_{\rho,k}(\psi_1)|\leq k+1$ and
    $|\witness_{\rho,k}(\psi_2)|\leq k+1$.
    \item If $\psi=\fpinf\psi_1$, then, by induction hypothesis, for each $i\in\witness_{\rho,k}(\psi)$,
    $(\rho\sub{i}{},k)\not\models\fpinf\psi_1$.
    Consider $i_{\max}=\max\{\witness_{\rho,k}(\psi)\}$.
    By definition of the semantics of the $\fpinf$ operator, there exists $i'_{\max} \geq i_{\max}$ such that 
    for each $i'_{\max}\leq j\leq i'_{\max} +k$, $(\rho\sub{j}{},k)\not\models\psi_1$.
    Then, define $\witness_{\rho,k}(\psi_1)=\{i'_{\max},i'_{\max} +1,\ldots,i'_{\max} +k\}$.
    By construction, for each $i\in\witness_{\rho,k}(\psi)$, $i\leq i_{\max}$ and therefore
    $i\leq i'_{\max}$. Thus, for each $i\in\witness_{\rho,k}(\psi)$, there exists a $i'\geq i$ such that
    for all $i'\leq j\leq i'+k$, $j\in\witness_{\rho,k}(\psi_1)$, by taking $i'=i'_{\max}$.
    Moreover, for each $i\in\witness_{\rho,k}(\psi_1)$, $(\rho\sub{i}{},k)\not\models\psi_1$ by construction.
    Finally, by construction $|\witness_{\rho,k}(\psi_1)|\leq k+1$.
\end{itemize}

Let $i=\max(\witness_{\rho,k})$ be the greatest index in the witness function.
Then, let $\pi=\rho[0..i]$.
Note that for any run $\pi\rho'$, $\witness_{\rho,k}$ is also a witness for $\pi\rho'$.
\end{proof}

\begin{lemma}\label{lm:witnessreduction}
Given an LTS $\ltsFull$, a formula $\varphi\in\fpLTL$ and a bound $k\geq 0$, assume that
there exists a finite path
  $\pi$ and a function $\witness$ such that for all run $\pi\rho'$, $\witness$ is a witness for $\pi\rho'$
  with bound $k$, $\max(\witness)< |\pi|$ and for each $\psi\in\subform(\varphi)$, $|\witness(\psi)|\leq k+1$.
  Then, there exists $\pi'$ and $\witness'$ such that $|\pi'|\leq (k+1)|\varphi|\,(|\states|+1)$, and for all run $\pi'\rho'$, $\witness'$ is a witness for $\pi'\rho'$
  with bound $k$ and $\max(\witness')\leq |\pi'|$.
\end{lemma}

\begin{proof}
Assume that there exists a finite path
  $\pi$ and a function $\witness$ such that for all run $\pi\rho'$, $\witness$ is a witness for $\pi\rho'$
  and $k$, $\max(\witness)< |\pi|$ and for each $\psi\in\subform(\varphi)$, $|\witness(\psi)|\leq k+1$,
  and that $|\pi|>(k+1)|\varphi|\,(|\states|+1)$.
  Let $I=\{i\mid \psi\in\subform(\varphi) \land i\in \witness(\psi)\}$ be the set of all
  positions that appear at some point in the witness function.
  As for each $\psi\in\subform(\varphi)$, $|\witness(\psi)|\leq k+1$, we have that $|I|\leq (k+1)|\varphi|$.
  Therefore, there are at least $|\states|+1$ consecutive positions in $\pi$ such that those positions
  do not belong to $I$.
  Thus, there are $0\leq i_1<i_2< |\pi|$ such that $\pi[i_1]=\pi[i_2]$, and $\pi[i_1..i_2]$ is
  a cycle and $\forall i_1\leq j\leq i_2,\pi[j]\not\in I$.
  Consider $\pi'=\pi\sub{0}{i_1}\pi\sub{i_2}{|\pi|}$.
  Let $\witness'$ be a function defined as follows. For every $\psi\in\subform(\varphi)$,
  $\witness'(\psi) = \{i\in\witness(\psi)\mid i< i_1\}\cup\{i-(i_2-i_1)\mid i\in\witness(\psi)\land i>i_2\}$.
  In particular, no position in $\{i_1,\ldots,i_2\}$ can belong to $\witness(\psi)$ by assumption.
  Note that $\max(\witness')=\max(\witness)-(i_2-i_1)$, so that $\max(\witness')\leq |\pi'|$, and for any run $\pi'\rho'$,
  for any $i_2< i\leq \max(\witness)$, $(\pi'\rho')[i-(i_2-i_1)]=\pi'[i-(i_2-i_1)]=\pi[i]$,
  and for any $0\leq i\leq i_1$, $(\pi'\rho')[i]=\pi'[i]=\pi[i]$.
  Let us prove that for any run $\pi'\rho'$, $\witness'$ is a witness for $\pi'\rho'$ and $k$ by checking that
  the six rules of Def.~\ref{def:witness} hold for each subformula $\psi$ of $\varphi$.
  \begin{itemize}
      \item If $\psi=\alpha$ is an atomic proposition, then for all $i\in\witness'(\alpha)$, two cases can occur.
      If $i\leq i_1$ then $i\in\witness(\alpha)$ and $\pi'[i]=\pi[i]$, thus $\alpha\not\in\lbl(\pi'[i])$.
      If $i>i_1$ then $i+(i_2-i_1)\in\witness(\alpha)$ and $\pi'[i]=\pi[i+(i_2-i_1)]$,
      thus $\alpha\not\in\lbl(\pi'[i])$.
      \item If $\psi=\lnot\alpha$ is an atomic proposition, then for all $i\in\witness'(\alpha)$,
      two cases can occur.
      If $i\leq i_1$ then $i\in\witness(\alpha)$ and $\pi'[i]=\pi[i]$, thus $\alpha\in\lbl(\pi'[i])$.
      If $i>i_1$ then $i+(i_2-i_1)\in\witness(\alpha)$ and $\pi'[i]=\pi[i+(i_2-i_1)]$, thus $\alpha\in\lbl(\pi'[i])$.
      \item If $\psi=\psi_1\lor\psi_2$, then for all $i\in\witness'(\psi)$, 
      two cases can occur.
      If $i\leq i_1$ then $i$ is in $\witness(\psi)$, $\witness(\psi_1)$ and $\witness(\psi_2)$.
      Therefore, $i\in\witness'(\psi_1)\cap\witness'(\psi_2)$.
      If $i>i_2$, then $i+(i_2-i_1)$ is in $\witness(\psi)$, $\witness(\psi_1)$ and $\witness(\psi_2)$.
      Therefore, $i\in\witness'(\psi_1)\cap\witness'(\psi_2)$.
      \item If $\psi=\psi_1\land\psi_2$, then for all $i\in\witness'(\psi)$,
      two cases can occur.
      If $i\leq i_1$ then $i$ is in $\witness(\psi)$ and thus is in $\witness(\psi_1)$ or $\witness(\psi_2)$.
      Therefore, $i\in\witness'(\psi_1)\cup\witness'(\psi_2)$.
      If $i>i_2$, then $i+(i_2-i_1)$ is in $\witness(\psi)$, and thus $\witness(\psi_1)$ or $\witness(\psi_2)$.
      Therefore, $i\in\witness'(\psi_1)\cup\witness'(\psi_2)$.
      \item If $\psi=\fpinf\psi_1$, then for all $i\in\witness'(\psi)$, let us show that there exists $i'$ such that
      for each $i'\leq j\leq i'+k$, $j\in\witness'(\psi_1)$.
      Two cases can occur for each $i\in \witness'(\psi)$.
      
      If $i\leq i_1$ then $i\in\witness(\psi)$, and there exists $i'\leq i$ such that for each $i'\leq j\leq i'+k$, $j\in\witness(\psi_1)$. Note that either $i'+k<i_1$ or $i'>i_2$ since $\{i',\ldots,i'+k\}\in I$. If $i'+k<i_1$, then for each $i'\leq j\leq i'+k$, $j\in\witness'(\psi_1)$, and $i'\geq i$. If $i'>i_2$, then for each $i'\leq j\leq i'+k$, $j-(i_2-i_1)\in\witness'(\psi_1)$. Finally, note that $i'-(i_2-i_1)\geq i$.
      
      If $i> i_2$ then $i+(i_2-i_1)\in\witness(\psi)$ and there exists $i'\leq i+(i_2-i_1)$ such that for each $i'\leq j\leq i'+k$, $j\in\witness(\psi_1)$.
      Then, as $i'>i_2$, we have for each $i'\leq j\leq i'+k$ that $j-(i_2-i_1)\in\witness'(\psi_1)$ and $i'-(i_2-i_1)\geq i$.
      \item As $0<i_1$, then $0\in\witness'(\varphi)$.
  \end{itemize}
    As long as $|\pi| > (k+1)|\varphi|\,(|\states|+1)$, then this procedure of shortening $\pi$ by
    removing a cycle and
    modifying $\witness$ can be applied to get a strictly shorter $\pi$.
    This procedure terminates and results in a $\pi$ such that $|\pi|\leq (k+1)|\varphi|\,(|\states|+1)$.
\end{proof}

This concludes the proof of Prop.~\ref{prop:smallwitness}, as
Lem.s~\ref{lm:witnessconstruction} and~\ref{lm:witnessreduction} detail how to get a small witness, and in the other direction
$(\rho,k)\not\models\varphi$ is implied by Lem.~\ref{lm:witnesscorrectness} and the fact that $0\in\witness_{\rho,k}(\varphi)$.

\section{Initialized Muller formulas} 
\label{app:secInitSys}
\subsection{Universal model checking}

\smallwitnessinitialised*

In order to prove the above proposition, we restate Lem.~\ref{lem:pumpingloopstart} in our new context.

\begin{lemma}\label{lem:pumpingloopstartcorollary}
  Given an LTS $\ltsFull$, a finite path $\pi$, a run $\rho=\pi\rho'\in\runs$, a bound $k\geq 0$, and
  a formula $\varphi\in\fpLTLplus$,
  if $(\rho',k)\not\models\varphi$, then $(\rho,k)\not\models\varphi$.
\end{lemma}

This lemma is a direct corollary of Lem.~\ref{lem:pumpingloopstart}
by noting that the condition upon the initial states is irrelevant when all
state formulas are in the scope of an $\fpinf$ operator.

\begin{proof}[Proof of Prop.~\ref{prop:smallwitnessinitialised}]
  Assume that there exists a run
  $\rho\in\runs$ such that $(\rho,k)\not\models\f\varphi$.
  By definition of the $\f$ operator, this implies that for all $i\geq 0$,
  $(\rho\sub{i}{},k)\not\models\varphi$.
  By Prop.~\ref{prop:smallwitness}, for each $i\geq 0$,
  there exists a finite path $\pi_i$ and a function $\witness_i$ such that for all
  run $\pi_i\rho'\in\cyl(\pi_i)$, $\witness_i$ is a witness for $\pi_i\rho'$ and $k$ and
  $|\pi_i|\leq (k+1)|\varphi|(|\states|+1)$.
  As this holds for infinitely many $i$, there are at least two $i_1<i_2$ such that
  $\pi_{i_1}=\pi_{i_2}$ and $\witness_{i_1}=\witness_{i_2}$, and $i_2-i_1\geq (k+1)|\varphi|(|\states|+1)$.
  Let us call $\pi_{i_1}=\pi$ and $\witness_{i_1}=\witness$.
  By definition of the witness, $\pi[0]=\rho[i_1]$, therefore, $\pi[0]$ is reachable from
  $\rho[0]$, as $\rho\sub{}{(i_1+1)}$ is a valid finite path.
  Moreover, as $i_2-i_1\geq|\pi|$, $\rho\sub{(i_1+|\pi|-1)}{(i_2+1)}$ is a valid finite path.
  As $\rho[(i_1+|\pi|-1)]=\pi[|\pi|-1]$ and $\rho[i_2]=\pi[0]$,
  $\pi[0]$ is reachable from $\pi[|\pi|-1]$.
  
  Conversely, assume that there exists a finite path
  $\pi$ and a function $\witness$ such that for any run $\pi\rho'$, $\witness$ is a witness for $\pi\rho'$ and $k$, $|\pi|\leq (k+1)|\varphi|(|\states|+1)$, $\max(\witness)< |\pi|$ and $\pi[0]$ is reachable
  from $\rho[0]$ and $\pi[|\pi|-1]$.
  As $\pi[0]$ is reachable from $\rho[0]$, let us define $\rho[0]w_0\pi[0]$ a finite path from $\rho[0]$
  to $\pi[0]$.
  As $\pi[0]$ is reachable from $\pi[|\pi|-1]$, let us define $\pi[|\pi|-1]w\pi[0]$ a finite path
  from $\pi[|\pi|-1]$ to $\pi[0]$.
  Let us define $\rho=\rho[0]w_0(\pi w)^\omega$.
  Let us prove that $(\rho,k)\not\models\f\varphi$.
  let us define, for each $l\geq 0$, $i_l=|\rho[0] w_0|+l(|\pi w|)$, such that for
  each $l\geq 0$, $\rho\sub{i_l}{}=(\pi w)^\omega$.
  By Prop.~\ref{prop:smallwitness}, for each $l\geq 0$, $\rho\sub{i_l}{}\not\models\varphi$.
  For each $i\geq 0$, there exists a  $l\geq 0$ such that $i_l\geq i$, and
  $\rho\sub{i}{}=\rho\sub{i}{i_l}\rho\sub{i_l}{}$.
  As $\varphi\in\fpLTLplus$, by Lem.~\ref{lem:pumpingloopstartcorollary}, one has $(\rho\sub{i}{},k)\not\models\varphi$.
  This holds for all $i\geq 0$, therefore, $(\rho,k)\not\models\f\varphi$.
\end{proof}

\TLTLnpcomp*

This result is obtained with the non-deterministic algorithm in Algorithm~\ref{alg:uniMCprexind}.

\begin{algorithm}
  \caption{$\conp$ algorithm for the universal model checking of
    $\TfpLTL$\label{alg:uniMCprexind}}
 \KwData{An LTS $\ltsFull$ and a formula $\varphi\in\TfpLTL$}
 \KwResult{whether $\lts\not\models \varphi$}
 \SetKwFunction{fun}{CheckW}
 \SetKwFunction{funpath}{CheckPath}
 \medskip
 \SetKwProg{myalg}{GuessAndCheck}{}{}
 \SetKwProg{check}{Check}{}{}
 \SetKwData{n}{N}
 \myalg{$(\lts, \varphi)$}{
   $\n \gets |\states|+1$\;
   guess a finite path $\pi$ such that $|\pi|\leq (N+1)|\varphi|(|\states|+1)$\;
   guess a function $\witness:\subform(\varphi)\mapsto 2^\mathbb{N}$ such that $\max(\witness)< |\pi|$\;
   \Return $(0\in\witness(\varphi))\land$\fun{$\lts, \witness, \varphi,\pi$}$\land$\funpath{$\lts, \pi$}\;
 }
 \SetKwProg{myFun}{}{}{}
 \myFun{\fun{$\lts, \witness, \varphi,\pi$}}{
   \If{$\varphi=\alpha$} {
     \Return $\forall i\in\witness(\varphi), a\not\in\lbl(\pi[i])$\;
   }
   \ElseIf{$\varphi=\lnot\alpha$} {
     \Return $\forall i\in\witness(\varphi), a\in\lbl(\pi[i])$\;
   }
   \ElseIf{$\varphi=\psi_1\lor\psi_2$}{
     \Return $(\witness(\varphi)=\witness(\psi_1)\cap\witness(\psi_2))\land$\fun{$\lts,\witness, \psi_{1},\pi$} $\land$ \fun{$\lts,\witness, \psi_{2},\pi$}\;
   }
   \ElseIf{$\varphi=\psi_1\land\psi_2$}{
     \Return $(\witness(\varphi)=\witness(\psi_1)\cup\witness(\psi_2))\land$\fun{$\lts,\witness, \psi_{1},\pi$} $\land$ \fun{$\lts,\witness, \psi_{2},\pi$}\;
   }
   \ElseIf{$\varphi=\fpinf\psi$}{
    \Return $(\exists 0\le i \le |\pi|, \forall i\le j \le i+\n, j\in\witness(\psi))\land$ \fun{$\lts,\witness,\psi,\pi$}\;
   }
 }
 \myFun{\funpath{$\lts, \pi$}}{
   \Return $\pi[0]$ is reachable from $\sinit$ and $\pi[|\pi|-1]$\;
   
 }
\end{algorithm}

\subsection{Fair model checking}

\thmfairMC*

We will need the following notions:

\begin{definition}[Strict predecessor]
  Given an LTS $\lts$ and a subset $U\subseteq\states$ of states,
  the strict predecessor set of $U$ is the set
  $\pre(U)=\{s\in\states\mid\successor(s)\subseteq U\}$.
\end{definition}

In other words, $\pre(U)$ describes the set of states $s$ from where
it is unavoidable to hit $U$ in one step.

\begin{definition}[Sure attractor]
  Given an LTS $\lts$ and a subset $U\subseteq\states$ of states,
  consider the operator $\mu \colon U\mapsto U\cup\pre(U)$.
  We call the smallest fixed point of $\mu$ denoted by $\pre^*(U)$ the set of
  \emph{sure attractor}.
\end{definition}

In other words, $\pre^*(U)$ describes the set of states $s$ from where it is unavoidable to hit $U$
at some point in the future. Note that one can state that it is the set of states from where
it is unavoidable to hit $U$ in $|S|$ or less steps, as otherwise, a loop that avoids $U$ can be found.

\begin{figure}
  \centering
      \begin{tikzpicture}[>=latex', join=bevel, thick, initial text =]
    \node[] (1) at (00bp, 0bp) [draw, circle] {$a$};
    \node[] (2) at (50bp, -30bp) [draw, circle] {$c$};
    \node[] (3) at (50bp, 30bp) [draw, circle] {$b$};
    \node[] (4) at (100bp, -30bp) [draw, circle] {$e$};
    \node[] (5) at (100bp, 30bp) [draw, circle] {$d$};
    \draw[->,] (1) [] to node [below] {} (2);
    \draw[->,] (1) [] to node [below] {} (3);
    \draw[->,] (2) [] to node [below] {} (3);
    \draw[->,] (2) [] to node [below] {} (4);
    \draw[->,] (3) [] to node [below] {} (5);
    \draw[->,] (5) [bend left] to node [below] {} (4);
    \draw[->,] (4) [bend left] to node [below] {} (5);
    \draw[->,] (2) [loop below] to node [below] {} (2);  
  \end{tikzpicture}
  \caption{\label{fig:sureattractor}System of Example~\ref{ex:sureattractor}}
\end{figure}

\begin{example}\label{ex:sureattractor}
 Consider the system depicted in \figurename~\ref{fig:sureattractor}, and let $U=\set{e}$. Then,
 $\pre(U)=\set{d}$, and $\pre^*(U)=\set{e,b,d}$. Note that $c\not\in\pre^*(U)$ because there is
 a self loop that ensures that from $c$, there will always be the possibility to stay in $c$ and
 therefore never reach $e$. Then, as there is the possibility from $a$ to reach $c$, we have that
$a\not\in\pre^*(U)$
\end{example}

Note that computing $\pre(U)$ can be done in polynomial time. Then, computing $\pre^*(U)$ consists
in at most $|\states|$ iterations of $\mu$, that is at most $|\states|$ computations of a strict predecessor,
and therefore can also be computed in polynomial time.

Since formulas in $\f\varphi\in\f(\fpLTLplus)$ define \emph{prefix
  independent} sets, we can ignore any finite prefix of a run, and
only check that $\varphi$ is almost surely satisfied in each bottom SCC (BSCC).  The
decomposition into BSCCs can be done using standard graph
techniques. The crux resides in checking whether a BSCC $B$ is
``good'', \textit{i.e.}~whether the formula is almost surely satisfied in $B$.
This is done by a \emph{tableaux like} approach. We recursively compute
the set of states from where \emph{sub-formulas} are satisfied. The key
observation lies in the fact thet $B$ is ``good'' iff the
formula can be satisfied from anywhere in $B$.  This intuition is
described in Algorithm~\ref{alg:fairMCprexind}.

\begin{algorithm}[H]
  \caption{Polynomial time algorithm for the fair model checking of
    $\TfpLTL$\label{alg:fairMCprexind}}
  \KwData{An LTS $\ltsFull$ and a formula $\f\varphi\in\TfpLTL$}
  \KwResult{Whether $\lts\asMod\f\varphi$}
  \SetKwFunction{algo}{CheckSystem}
  \SetKwFunction{fun}{$\induction$}
  \SetKwData{tmp}{tmp}
  \SetKwData{c}{C}
  \medskip
  \SetKwProg{myalg}{}{}{}
  \myalg{\algo{$\lts, \f\varphi$}}{
    \SetKwData{flag}{flag}
    \SetKwData{comp}{Components}
    $\comp\leftarrow$ The set of all the BSCCs of $\lts$\;
    \ForAll{$\c\in \comp$}{
      \If{$\fun{$\c,\varphi$} \neq\c$}{
        \Return {\sf no}\;
      }
    }
    \Return {\sf yes}\;
  }
\medskip
\SetKwProg{myFun}{}{}{}
\myFun{\fun{$\c, \varphi$}}{
  \If{$\varphi \in \ap$} {
    \Return the set $\{s\in \c \mid \varphi \in \lbl(s) \}$\;
  }
  \ElseIf{$\varphi=\psi\land\theta$}{
    \Return \fun{$\c,\psi$} $\cap$ \fun{$\c,\theta$}\;
  }
  \ElseIf{$\varphi=\psi\lor\theta$}{
    \Return \fun{$\c,\psi$} $\cup$ \fun{$\c,\theta$}\;
  }
  \ElseIf{$\varphi=\fpinf\psi$}{
    \If{$\pre^*(\fun{$\c,\psi$})=\c$}{
      \Return $\c$\;
    }
    \Else{
      \Return $\emptyset$\;
      }
    }
  }
\end{algorithm}

Thanks to Prop.~\ref{prop:fairMC} and Lem.~\ref{lem:polyBSCC},
Thm.~\ref{thm:fairMC} follows by
noticing that the set of all the BSCCs can be computed in polynomial time.

\begin{restatable}[]{proposition}{propfairMC}\label{prop:fairMC}
  Given an LTS $\lts$ and a formula $\f\varphi\in\f(\fpLTLplus)$, then
  $\lts \asMod \f\varphi$ iff ${\sf \algo}$ returns {\sf yes}.
\end{restatable}

In order to state the proof of Prop.~\ref{prop:fairMC}, we first need to
introduce some technical properties of Algorithm~\ref{alg:fairMCprexind}.


\begin{lemma}\label{lem:01BSCC}
  Given an LTS $\lts$, a BSCC $B$ of $\lts$ and a formula
  $\varphi\in\fpLTLplus$, then exactly one of the followings holds:
  \begin{itemize}
  \item ${\induction(B,\varphi)} = B$.
  \item ${\induction(B,\varphi)} = \emptyset$.
  \end{itemize}
\end{lemma}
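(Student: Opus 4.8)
The plan is to establish, by induction on the structure of $\phi$, the slightly stronger and purely syntactic statement that ${\sf \induction(B,\lts,\phi)}\in\{B,\emptyset\}$ for every $\phi\in\fpLTLplus$; this is a property of the control flow of the procedure alone and, in particular, does not appeal to the semantic characterisation of Lemma~\ref{lem:pre*}. Before starting the induction I would record two observations. First, a BSCC is non-empty, so $B\neq\emptyset$; consequently the two alternatives in the statement are mutually exclusive, and ``exactly one holds'' reduces to ``at least one holds'', i.e.\ to the stronger claim above. Second, by the inductive definition of $\fpLTLplus$ the outermost connective of any $\phi\in\fpLTLplus$ is $\fpinf$, $\lor$ or $\land$ --- never an atomic proposition --- so the atomic branch of ${\sf \induction}$, which may return a proper subset of $B$, is executed only inside recursive subcalls and never at the level of a $\fpLTLplus$-formula; moreover a formula of shape $\phi_1\lor\phi_2$ (resp.\ $\phi_1\land\phi_2$) that lies in $\fpLTLplus$ can only have been produced by the disjunction (resp.\ conjunction) clause, hence both immediate subformulas lie in $\fpLTLplus$ and the induction hypothesis applies to them.

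The anchor of the induction is the case in which the outermost connective is $\fpinf$, i.e.\ $\phi=\fpinf\psi$ --- this subsumes both the clause with $\psi\in\fpLTL$ and the clause with $\psi\in\fpLTLplus$ in the definition of $\fpLTLplus$. Here one simply reads off Algorithm~\ref{alg:fairMC}: the procedure computes a set $\mathsf{tmp}={\sf \induction(B,\lts,\psi)}$ and then returns $B$ when $\pre^*(B,\mathsf{tmp})=B$ and $\emptyset$ otherwise, so the returned value lies in $\{B,\emptyset\}$ no matter what $\mathsf{tmp}$ is. This case therefore needs no induction hypothesis; it is exactly what forces the output into $\{B,\emptyset\}$, and the remaining cases only have to preserve this property.

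For the inductive step, if $\phi=\phi_1\lor\phi_2$ then ${\sf \induction(B,\lts,\phi)}$ equals ${\sf \induction(B,\lts,\phi_1)}\cup{\sf \induction(B,\lts,\phi_2)}$, and each operand is $B$ or $\emptyset$ by the induction hypothesis; since $B\cup B=B$, $B\cup\emptyset=\emptyset\cup B=B$ and $\emptyset\cup\emptyset=\emptyset$, the union is again in $\{B,\emptyset\}$. The case $\phi=\phi_1\land\phi_2$ is symmetric, the algorithm returning the intersection of the two recursive values and $\{B,\emptyset\}$ being closed under intersection ($B\cap B=B$, $B\cap\emptyset=\emptyset\cap B=\emptyset$, $\emptyset\cap\emptyset=\emptyset$). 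I do not expect a genuinely difficult step in this argument --- the statement is essentially a zero-one law for the tableau procedure. The only two points that deserve to be spelled out are the unique-parsing remark that makes the induction hypothesis available in the $\lor$ and $\land$ cases, and the non-emptiness of the BSCC $B$, which is precisely what turns ``at least one of the two sets is returned'' into ``exactly one''.
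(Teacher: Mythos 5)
Your proposal is correct and follows essentially the same route as the paper: a structural induction over $\phi\in\fpLTLplus$ in which the $\fpinf$ case is immediate from the shape of the algorithm's return values and the $\lor$ and $\land$ cases follow from closure of $\{B,\emptyset\}$ under union and intersection. The extra observations you make (non-emptiness of $B$ to justify ``exactly one'', and the fact that the atomic branch never occurs at the top level of a $\fpLTLplus$-formula) are correct refinements that the paper leaves implicit.
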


\begin{proof}
  Let $B$ be a BSCC of $\lts$, and $\varphi\in\fpLTLplus$.  Let us prove
  the result by structural induction over $\varphi$. By definition of
  $\fpLTLplus$, three cases can occur.
  \begin{itemize}
  \item $\varphi=\fpinf\psi$, $\psi\in\fpLTL$: Then,
    ${\induction(B,\varphi)}$ returns either $B$ or $\emptyset$,
    and the result holds.
  \item $\varphi=\psi\lor\psi'$, $\psi,\theta\in\fpLTLplus$: By induction
    hypothesis, ${\induction(B,\psi)}$ is either equal to $B$
    or to $\emptyset$, and the same holds for
    ${\induction(B,\psi')}$.  Given that
    ${\induction(B,\varphi)}={\sf
      \induction(B,\psi)}\cup{\induction(B,\psi')}$,
    ${\induction(B,\varphi)}$ is either empty (if both
    ${\induction(B,\psi)}$ and
    ${\induction(B,\psi')}$ are empty) or equal to $B$
    otherwise.
   \item $\varphi=\psi\land\psi'$, $\psi,\theta\in\fpLTLplus$: By
    induction hypothesis, ${\induction(B,\psi)}$ is either
    equal to $B$ or to $\emptyset$, and the same holds for
    ${\induction(B,\psi')}$.  Given that
    ${\induction(B,\varphi)}={\sf
      \induction(B,\psi)}\cap{\induction(B,\psi')}$,
    ${\induction(B,\varphi)}$ is either equal to $B$ (if both
    ${\induction(B,\psi)}$ and
    ${\induction(B,\psi')}$ are equal to $B$) or empty
    otherwise.
  \end{itemize}
\end{proof}

\begin{lemma}\label{lem:mergingstartingcounterexample}
  Let $\psi\in\fpLTL$ be a formula, $N=|\states|+1$, $B$ be a BSCC and $\pi\in B^{N+1}$ a finite path such that for all
  $0\leq i\leq N$, there exists a run $\rho_i\in \pi[i]B^\omega$ such
    that $(\rho_i,N)\not\models\psi$.
    Then, there exists a run $\rho\in B^\omega$ such that $\rho\sub{0}{(N+1)}=\pi$ and for all $0\leq i\leq N$,
    $(\rho\sub{i}{},N)\not\models\psi$.
\end{lemma}

This lemma is a technical lemma that states that multiple counter examples can be merged into one.
It is used in the proof of the next lemma.

\begin{proof}
  Assume that $\pi\in B^{N+1}$ is a finite path such that for all
  $0\leq i\leq N$, there exists a run $\rho_i\in \pi[i]B^\omega$ such
    that $(\rho_i,N)\not\models\psi$.
    Let us construct the counter example $\rho$.
    Consider $\rho_i\in \pi[i]B^\omega$ such that $(\rho_i,N)\not\models\psi$.
    By Coro.~\ref{cor:ufeq}, there exists a finite prefix $w_i$ of $\rho_i$ such that any run $\rho'_i\in\cyl(w_i)$ satisfies
    $(\rho'_i,N)\not\models\psi$.
    Then, as $B$ is strongly connected, consider a finite path $x_0$ such that $(\pi[N],x_0[0])\in\tr$
    and $(x_0[|x_0|-1],w_0[0])\in\tr$,
    and for each $1\leq i\leq N$, consider a finite path $x_i$ such that $(w_{i-1}[|w_{i-1}|-1],x_i[0])\in\tr$
    and $(x_i[|x_i|-1],w_i[0])\in\tr$.
    Then, $\pi'=\pi x_0 w_0 x_1 w_1 \cdots x_N w_N$ is a finite path over $B$.
    Let $\rho$ be any run in $\cyl(\pi')$.
    Then, by construction, $\rho\sub{0}{(N+1)}=\pi$.
    Let us prove that for each $0\leq i\leq N$, $(\rho\sub{i}{},N)\not\models\psi$.
    Consider any $0\leq i\leq N$.
    By construction, there exists a $j\geq i$ such that $\rho\sub{i}{}=\rho\sub{i}{j} w_i\rho'$.
    As $w_i\rho'\in\cyl(w_i)$, $(w_i\rho')\not\models\psi$.
    As $\rho[i]=w_i[0]$, by Lem.~\ref{lem:pumpingloopstart}, $(\rho\sub{i}{},N)\not\models\psi$.
\end{proof}

\begin{lemma}\label{lem:correctioninduction}
  Given an LTS $\lts$, a BSCC $B$ of $\lts$, a formula
  $\varphi\in\fpLTL$, $b$ a state in $B$, and $N=|\states|+1$, then the
  following assertions are equivalent:
  \begin{itemize}
  \item $b \in {\induction(B,\varphi)} $.
  \item Any run $\rho$ in $bB^\omega$ satisfies
    $(\rho,N)\models\varphi$.
  \end{itemize}
\end{lemma}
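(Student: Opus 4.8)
The plan is to prove, by structural induction on $\phi\in\fpLTL$, the slightly stronger statement
\[
  {\sf \induction}(B,\lts,\phi)=\{\,s\in B\mid (\rho,N)\models\phi\ \text{for every}\ \rho\in sB^\omega\,\};
\]
the lemma is then the instance $s=b$. The cases $\phi=\alpha$ and $\phi=\lnot\alpha$ are immediate, since $(\rho,N)\models\alpha$ depends only on $\rho[0]=s$, which is exactly the label test ${\sf \induction}$ performs. Two elementary facts about $\fpinf$ are used throughout: \emph{(i)} if $(\sigma,N)\not\models\fpinf\chi$ then $(u\sigma,N)\not\models\fpinf\chi$ for every finite word $u$ (a failing window is merely shifted), so combined with Lemma~\ref{lem:ufeq} the failure of $\fpinf\chi$ is forced by the mere occurrence of a suitable finite path \emph{as a factor}; and \emph{(ii)} if $(\sigma,N)\models\fpinf\chi$ then $(\sigma[1..],N)\models\fpinf\chi$, i.e.\ satisfaction of $\fpinf\chi$ survives deleting a prefix.

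\textbf{Boolean connectives.} Write $\phi$ as a Boolean combination $F_\phi$, with negation only in front of atoms, of atoms and of its maximal subformulas of the shape $\fpinf\chi_1,\dots,\fpinf\chi_m$; each $\fpinf\chi_l$ lies in $\fpLTLplus$. For $s\in B$ the atom values are fixed by $\lbl(s)$, and by the induction hypothesis applied to each $\fpinf\chi_l$ together with Lemma~\ref{lem:01BSCC}, each ${\sf \induction}(B,\lts,\fpinf\chi_l)$ is $B$ or $\emptyset$; recording this as $\gamma_l\in\{\true,\false\}$, a straightforward induction on the Boolean structure gives ${\sf \induction}(B,\lts,\phi)=\{\,s\in B\mid F_\phi(\lbl(s),\gamma_1,\dots,\gamma_m)=\true\,\}$. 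On the other hand $(\rho,N)\models\phi$ iff $F_\phi\big(\lbl(\rho[0]),\vec\beta(\rho)\big)=\true$, where $\vec\beta(\rho)_l=\big[(\rho,N)\models\fpinf\chi_l\big]$, and $F_\phi$ is monotone in $\vec\beta$ because the $\fpinf\chi_l$ are never negated. If $\gamma_l=\true$, then $\fpinf\chi_l$ holds on every run inside $B$; if $\gamma_l=\false$, the induction hypothesis gives a run in $B$ failing $\fpinf\chi_l$, hence (Lemma~\ref{lem:ufeq}) a finite failure witness, and splicing all such witnesses into one run — possible since $B$ is a BSCC — yields $\rho\in sB^\omega$ with $\vec\beta(\rho)=(\gamma_1,\dots,\gamma_m)$. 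Since every achievable $\vec\beta(\rho)$ dominates $(\gamma_1,\dots,\gamma_m)$ componentwise, monotonicity shows $F_\phi(\lbl(s),\vec\beta(\rho))=\true$ for \emph{all} $\rho\in sB^\omega$ exactly when $F_\phi(\lbl(s),\gamma_1,\dots,\gamma_m)=\true$. This settles $\phi=\psi\lor\theta$ and $\phi=\psi\land\theta$ and reduces everything to $\phi=\fpinf\psi$.

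\textbf{The operator $\fpinf$.} For $\phi=\fpinf\psi$, ${\sf \induction}$ forms $T:={\sf \induction}(B,\lts,\psi)$, which by the induction hypothesis equals $G:=\{\,s\in B\mid (\sigma,N)\models\psi\ \text{for all}\ \sigma\in sB^\omega\,\}$, and returns $B$ if $\pre^*(B,T)=B$ and $\emptyset$ otherwise. The proof of Lemma~\ref{lem:pre*} never uses the particular shape $S^\theta_B$ of its target, so it applies verbatim to give: $\pre^*(B,G)=B$ iff $B$ contains no cycle disjoint from $G$. If there is no such cycle, then every path inside $B\setminus G$ is simple, hence has at most $|\states|<N$ states, so along every run $\rho$ in $B^\omega$ and from every position $i$ some $\rho[i+j]$ with $j\le N$ lies in $G$; as that suffix stays in $B$ we get $(\rho[(i+j)..],N)\models\psi$, whence $(\rho,N)\models\fpinf\psi$, and ${\sf \induction}$ rightly returns $B$.

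\textbf{The hard direction — the main obstacle.} Conversely, assume $B$ contains a cycle $C$ disjoint from $G$; for each $s\in B$ I must produce $\rho\in sB^\omega$ with $(\rho,N)\not\models\fpinf\psi$, and I expect this construction to be the delicate part. Fix a loop $\pi_C$ with $\occ(\pi_C)=C$, and for each maximal subformula $\fpinf\chi_l$ of $\psi$ that is avoidable in $B$ fix, via Lemma~\ref{lem:ufeq}, a finite path $q_l$ every extension of which fails $\fpinf\chi_l$. Let $\rho$ be obtained by routing, entirely inside the BSCC $B$: a path from $s$ into $C$; then the block $(\pi_C)^{N+1}$; then a walk realizing $q_1,q_2,\dots$ consecutively; then any continuation. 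Pick a position $i$ inside the block $(\pi_C)^{N+1}$ with $i,\dots,i+N$ all in that block. For every $l\le N$ the suffix $\rho[(i+l)..]$ still contains each $q_{l'}$ as a factor, so by \emph{(i)} it fails every avoidable $\fpinf\chi_{l'}$, while it satisfies every unavoidable one since it is a run in $B$; hence its vector of $\fpinf\chi$-values is exactly the worst-case valuation of $\psi$ from the previous step, and because $\rho[i+l]\in C$ lies outside $G$, evaluating $\psi$ yields $(\rho[(i+l)..],N)\not\models\psi$. As this holds for all $l\le N$, $(\rho,N)\not\models\fpinf\psi$, so ${\sf \induction}$ rightly returns $\emptyset$, completing the induction. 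The subtle point — and the reason a naive ``loop forever in $C$'' does not suffice — is that the witnesses $q_{l'}$ must be placed \emph{strictly after} the pumped block $(\pi_C)^{N+1}$: otherwise, by \emph{(ii)} read contrapositively, deleting the prefix up to $i+l$ could turn some $\fpinf\chi_{l'}$ true again. This placement is the analogue here of the $k$-pumping behind Lemma~\ref{lem:psiPumpCounterEx}, and getting it right is where I expect the real work to lie.
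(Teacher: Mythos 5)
Your proof is correct and follows the same skeleton as the paper's: structural induction on $\phi$, with the $\fpinf$ case reduced via $\pre^*$ and Lemma~\ref{lem:pre*} to the (non-)existence of a cycle avoiding the set computed for the subformula. Where you genuinely diverge is in how the two delicate points are discharged. For the Boolean connectives, the paper argues case by case and, in the disjunction case, passes rather quickly from ``every run satisfies $\psi\lor\psi'$'' to ``w.l.o.g.\ every run satisfies $\psi$''; your worst-case-valuation argument (a single spliced run achieving the componentwise-minimal vector of $\fpinf$-truth-values, plus monotonicity of the positive Boolean combination) makes that step explicit and is the cleaner justification. For the hard direction of the $\fpinf$ case, the paper introduces a fresh atomic label $A^\psi$ on ${\sf \induction(B,\lts,\psi)}$ and then invokes Lemmas~\ref{lm:witnessL2R} and~\ref{lem:psiPumpCounterEx} as if $A^\psi$ were a state formula determining the truth of $\psi$ on \emph{every} suffix; but $\rho[i]\notin G$ only guarantees that \emph{some} run from $\rho[i]$ fails $\psi$, so one still has to exhibit a single run that both lingers in the avoiding cycle for $N{+}1$ steps and actually fails $\psi$ from each of those positions. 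You identify exactly this obstacle and resolve it by placing the finite failure witnesses (obtained from Lemma~\ref{lem:ufeq} and the prefix-extension property of $\lnot\fpinf$) strictly after the pumped block, so that every one of the $N{+}1$ suffixes realises the worst-case valuation while starting outside $G$. This is more self-contained than the paper's appeal to the avoiding-sequence machinery and, in my view, patches a step the paper leaves implicit; the paper's route buys brevity by reusing Section~\ref{secUniversal}, yours buys a fully explicit counterexample construction.
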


\begin{proof}
  Let us show the equivalence by structural induction over $\varphi$.
  \begin{itemize}
  \item $\varphi=\theta\in\ap$: Then,
    $b \in {\induction(B,\theta)}$ iff
    $\theta\in\lbl(b)$.  As $\theta\in\lbl(b)$ iff any
    run $\rho$ in $bB^\omega$ satisfies
    $(\rho,N)\models\varphi$, the equivalence holds.
  \item $\varphi=\psi\land\psi'$: First, assume that
    $b \in {\induction(B,\varphi)} $.  By construction, this
    implies that $b \in {\induction(B,\psi)} $ and
    $b \in {\induction(B,\psi')} $.  By induction hypothesis,
    any run $\rho$ in $bB^\omega$ satisfies
    $(\rho,N)\models\psi$ and $(\rho,N)\models\psi'$, and therefore
    $\rho,N\models\varphi$.

    Now, assume that any run $\rho$ in $bB^\omega$ satisfies
    $(\rho,N)\models\varphi$.  By definition, this means that
    $(\rho,N)\models\psi$ and $(\rho,N)\models\psi'$.  By induction
    hypothesis, this implies that
    $b \in {\induction(B,\psi)} $ and
    $b \in {\induction(B,\psi')} $.  By construction, this
    implies that $b \in {\induction(B,\varphi)} $.

  \item $\varphi=\psi\lor\psi'$: First, assume that
    $b \in {\induction(B,\varphi)} $.  By construction, this
    implies that $b \in {\induction(B,\psi)} $ or
    $b \in {\induction(B,\psi')} $.  W.l.o.g., assume $b \in {\induction(B,\psi)} $.  By
    induction hypothesis, any run $\rho$ in $bB^\omega$
    satisfies $(\rho,N)\models\psi$ and therefore
    $(\rho,N)\models\varphi$.

    Now, assume that any run $\rho$ in $bB^\omega$ satisfies
    $(\rho,N)\models\varphi$.  By definition, this means that
    $(\rho,N)\models\psi$ or $(\rho,N)\models\psi'$.  W.l.o.g., assume that $(\rho,N)\models\psi$.  By induction
    hypothesis, this implies that
    $b \in {\induction(B,\psi)} $.  By construction, this
    implies that $b \in {\induction(B,\varphi)} $.

  \item $\varphi=\fpinf\psi$: First, note that by Lem.~\ref{lem:01BSCC},
  either ${\induction(B,\varphi)}=B$ or ${\induction(B,\varphi)}=\emptyset$.
  Let us handle both cases.
  
    First case, ${\induction(B,\varphi)}=B$.
    Then, one has to show that for any run $\rho\in B^\omega$, $(\rho,N)\models\fpinf\psi$.
    Let us introduce a new label $a^\psi$.
    Let us label the states in ${\induction(B,\psi)}$ with
    $a^\psi$, and only those states.
    By induction hypothesis over $\psi$, for any $b\in B$, $b\in{\induction(B,\varphi)}$ iff
    for all $\rho\in bB^\omega$, $(\rho,N)\models\psi$.
    By definition of $a^\psi$, the previous equivalence can be stated as follows: for any $b\in B$,
    $a^\psi\in\lbl(b)$ iff for all $\rho\in bB^\omega$, $(\rho,N)\models\psi$.
    Therefore, if for all $\rho\in B^\omega$, we have $(\rho,N)\models\fpinf a^\psi$, then
    for all $\rho\in B^\omega$, we have $(\rho,N)\models\fpinf\psi$.
    As we assumed that ${\induction(B,\varphi)}=B$, we have $\pre^*({\induction(B,\psi)} )=B$.
    Therefore, from any state $b\in B$, any finite path of lenght $N$ reaches a state of ${\induction(B,\psi)}$,
    that is a state with label $a^\psi$.
    Therefore, for all $\rho\in B^\omega$, we have $(\rho,N)\models\fpinf a^\psi$, and thus,
    for all $\rho\in B^\omega$, we have $(\rho,N)\models\varphi$.

    Second case, assume that ${\induction(B,\varphi)}=\emptyset$.
    Therefore, we must show that for all $b\in B$, there exists a run $\rho\in bB^\omega$ such that $(\rho,N)\not\models\varphi$.
    As $B$ is strongly connected, it is enough to show that there exists a run $\rho\in B^\omega$ such that
    $(\rho,N)\not\models\fpinf\psi$, as from any $b\neq\rho[0]$, we have a finite path $\pi$ from $b$ to $\rho[0]$ and
    $(\pi\rho,N)\not\models\fpinf\psi$.
    This is equivalent to showing that there exists a run $\rho\in B^\omega$ such that for all $0\leq j\leq N$,
    $(\rho\sub{j}{},N)\not\models\psi$.
    In the same way as in the first case, let us label the states in ${\induction(B,\psi)}$ with
    $A^\psi$, and only those states.
    By induction hypothesis, for each $b\in B$, $a^\psi\in\lbl(b)$ iff for every run $\rho\in bB^\omega$,
    $(\rho,N)\models\psi$.
    As ${\induction(B,\varphi)}=\emptyset$, we have that $\pre^*({\induction(B,\psi)} )\neq B$.
    Let $b_0\in B\setminus \pre^*({\induction(B,\psi)} )$.
    By definition of $a^\psi$, $b_0\not\in\pre^*(\set{b\in B\mid a^\psi\in\lbl(b)} )$.
    Therefore, there exists a finite path $\pi$ such that $\pi[0]=b_0$, $|\pi|=N+1$ and for all $0\leq i\leq N$,
    $a^\psi\not\in\lbl(\pi[i])$.
    Using the induction hypothesis, this implies that for each $0\leq i\leq N$, there exists a run $\rho_i\in \pi[i]B^\omega$ such
    that $(\rho_i,N)\not\models\psi$.
    By Lem.~\ref{lem:mergingstartingcounterexample}, there exists a run $\rho\in B^\omega$ such that $\rho\sub{0}{(N+1)}=\pi$
    and for all $0\leq i\leq N$, $(\rho\sub{i}{},N)\not\models\psi$.
    Therefore, $(\rho,N)\not\models\fpinf\psi$.\qedhere
  \end{itemize}
\end{proof}

\noindent We can now prove Prop.~\ref{prop:fairMC}.

\begin{proof}[Proof of Prop.~\ref{prop:fairMC}]
  First, assume that $\lts$
  fairly satisfies the formula, that is, there exists a $k$ such that
  $\prob_{\lts}(\set{\rho \in \runsinit \mid (\rho,k) \models \f\varphi}) = 1$.
  Consider a BSCC $B$ of $\lts$.
  Then, since $B$ is reachable,
  $\prob_{\lts}(\set{\rho \in \runsinit \mid \text{$\rho$ ends in $B$}}) \geq 0$.
  This implies that
  $\prob_{B}(\set{\rho \in B^\omega \mid (\rho,k) \models \f\varphi}) = 1$,
  and since $B$ is a BSCC,
  $\prob_{B}(\set{\rho \in B^\omega \mid (\rho,k) \models \varphi}) = 1$.
  By Coro.~\ref{cor:ufeq}, this implies that
  for all $\rho\in B^\omega$, $(\rho,k)\models\varphi$.
  Since this holds for any $\rho$ starting in any
  state of $B$, by lemma \ref{lem:correctioninduction}, this implies that
  ${\induction(B,\varphi)}=B$.
  Since this holds for any BSCC $B$ of $\lts$, by definition of
  ${\sf \algo}$, it returns {\sf yes}.

  Now, assume that $\lts$ does not
  fairly satisfy the formula, that is, for all $k$,
  $\prob_{\lts}(\set{\rho \in \runsinit \mid (\rho,k) \models \f\varphi}) < 1$.
  Let us fix a $k$.
  Note that \[\prob_{\lts}(\set{\rho \in \runsinit \mid \text{$\rho$ does not end in some BSCC $B$}}) =0\,.\]
  Therefore, there exists a BSCC $B_k$ of $\lts$ such that
  $\prob_{B_k}(\set{\rho \in B_k^\omega \mid (\rho,k) \models \f\varphi}) < 1$,
  thus there exists at least one $\rho_k\in B_k^\omega$ such that
  $(\rho_k,k)\not\models \f\varphi$, and therefore, in particular,
  $(\rho_k,k)\not\models \varphi$.
  By lemma \ref{lem:correctioninduction}, this means that
  $\rho[0] \not\in {\induction(B,\varphi)} $ and therefore, by lemma
  \ref{lem:01BSCC}, since $\varphi\in\fpLTLplus$,
  ${\induction(B,\varphi)}=\emptyset$.
  By definition of ${\sf \algo}$, it returns {\sf no}.
\end{proof}

Finally, we derive a polynomial time complexity bound for Algorithm~\ref{alg:fairMCprexind}.

\begin{restatable}[]{lemma}{polyBSCC}
  \label{lem:polyBSCC}
  Given an LTS $\lts$, a BSCC B of $\lts$ and a formula
  $\f\varphi\in\f(\fpLTLplus)$, the procedure
  ${\induction(B,\varphi)}$ runs  in time $O(|\varphi||B|^2)$.
\end{restatable}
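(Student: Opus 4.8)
The plan is a bottom-up accounting of the work done by $\induction$, following its recursion structure. First I would observe that a call $\induction(B,\lts,\chi)$ recurses only on the immediate syntactic subformulas of $\chi$: once on $\psi$ when $\chi=\fpinf\psi$, once each on $\psi$ and $\theta$ when $\chi=\psi\land\theta$ or $\chi=\psi\lor\theta$, and not at all when $\chi$ is a literal. Hence the tree of recursive calls is exactly the syntax tree of $\phi$, with one call per subformula. Since $\phi\in\fpLTLplus$ is built from literals using the unary operator $\fpinf$ and the binary operators $\land,\lor$, its syntax tree has at most $2|\phi|+1=O(|\phi|)$ nodes. It therefore suffices to show that a single call costs $O(|B|^2)$ excluding its recursive subcalls, and then to sum over the $O(|\phi|)$ calls.

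For the bookkeeping I would fix the convention that every set of states handled by the procedure is a subset of $B$, stored as a Boolean array indexed by the states of $B$. If $\chi$ is a literal, deciding for each of the at most $|B|$ states of $B$ whether it satisfies $\chi$ via $\lbl$ costs $O(|B|)$. If $\chi=\psi\land\theta$ or $\chi=\psi\lor\theta$, then once the two recursive calls have returned subsets of $B$, their intersection, resp. union, is formed in $O(|B|)$. So for these cases the local cost is clearly $O(|B|)\le O(|B|^2)$.

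The nontrivial case, and the one I expect to be the main obstacle, is $\chi=\fpinf\psi$. Here, after the recursive call returns a set $T\subseteq B$, the procedure must compute the sure attractor $\pre^*(T)$ inside $B$ and test whether it equals $B$. Because $B$ is a BSCC we have $\successor(s)\subseteq B$ for every $s\in B$, so this fixpoint computation stays entirely within the subgraph of $\lts$ induced by $B$; write $E_B=\{(s,t)\in\tr\mid s,t\in B\}$ for its edge set. A naive re-evaluation of $\pre$ at each of the up to $|B|$ iterations of $U\mapsto U\cup\pre(U)$ would cost $O(|B|\cdot|E_B|)$, which can be cubic for a dense BSCC; the point is to use the standard attractor computation instead, maintaining for every state a counter of how many of its successors are not yet in the current set, initialising a worklist with $T$, and, whenever a state enters, decrementing the counters of its predecessors and pushing any state whose counter reaches $0$. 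Each edge of $E_B$ is then processed a constant number of times, so the attractor is computed in $O(|B|+|E_B|)$ time; since $|E_B|\le|B|^2$ this is $O(|B|^2)$, and the final test $\pre^*(T)=B$ adds only $O(|B|)$.

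Putting the pieces together, every one of the $O(|\phi|)$ recursive calls does $O(|B|^2)$ local work, so the total running time of $\induction(B,\lts,\phi)$ is $O(|\phi|\,|B|^2)$, as claimed; in the degenerate case where $\phi$ is a single literal the bound should be read as $O((|\phi|+1)|B|^2)$. The only genuine subtlety is the linear-time implementation of the attractor step together with the observation that it is licit to restrict to the subgraph induced by $B$, which is exactly what the BSCC hypothesis guarantees.
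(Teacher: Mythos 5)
Your proof is correct and follows essentially the same route as the paper's: a structural induction over $\phi$ (equivalently, your per-subformula cost accounting over the syntax tree), with $O(|B|)$ local work for literals and the Boolean connectives and $O(|B|^2)$ for the $\fpinf$ case. The only difference is that you spell out the counter-based, edge-linear implementation of the $\pre^*$ fixpoint, which the paper simply asserts runs in time $|B|^2$ — a welcome elaboration rather than a divergence, since the naive fixpoint iteration would indeed only give a cubic bound on dense BSCCs.
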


\begin{proof}
  We show by structural induction over $\varphi$ that
  ${\induction(B,\varphi)}$ can be computed in time $|\varphi||B|^2$.
  \begin{itemize}
  \item $\varphi\in\ap$ is a state formula. Then, ${\induction(B,\varphi)}$
    is the set of states $s$ such that
    $\varphi\in\lbl(s)$.
    This can be computed in time $|B|\leq|B|^2$.
  \item $\varphi=\psi_1\land\psi_2$. Then,
    ${\induction(B,\varphi)}={\induction(B,\psi_1)}\cap{\induction(B,\psi_2)}$.
    By induction hypothesis, ${\induction(B,\psi_1)}$ can be computed
    in time $|\psi_1||B|^2$, and ${\induction(B,\psi_2)}$ can be
    computed in time $|\psi_2||B|^2$.
    Since ${\induction(B,\psi_1)}\subseteq B$ and
    ${\induction(B,\psi_2)}\subseteq B$, the intersection can be
    computed in time $|B|$.
    Therefore, ${\induction(B,\varphi)}$ can be computed in time
    $|\psi_1||B|^2+|\psi_2||B|^2+|B|\leq (|\psi_1|+|\psi_2|+1)|B|^2\leq |\varphi||B|^2$.
  \item $\varphi=\psi_1\lor\psi_2$. Then,
    ${\induction(B,\varphi)}={\induction(B,\psi_1)}\cup{\induction(B,\psi_2)}$.
    By induction hypothesis, ${\induction(B,\psi_1)}$ can be computed
    in time $|\psi_1||B|^2$, and ${\induction(B,\psi_2)}$ can be
    computed in time $|\psi_2||B|^2$.
    Since ${\induction(B,\psi_1)}\subseteq B$ and
    ${\induction(B,\psi_2)}\subseteq B$, the union can be
    computed in time $|B|$.
    Therefore, ${\induction(B,\varphi)}$ can be computed in time
    $|\psi_1||B|^2+|\psi_2||B|^2+|B|\leq (|\psi_1|+|\psi_2|+1)|B|^2\leq |\varphi||B|^2$.
  \item $\varphi=\fpinf\psi$. Then, to compute ${\induction(B,\varphi)}$,
    first, ${\induction(B,\psi)}$ is computed.
    By induction hypothesis, this is done in time $|\psi||B|^2$.
    Then, $\pre^*$ is computed. This is done in time $|B|^2$.
    Therefore, ${\induction(B,\varphi)}$ is computed in time
    $(|\psi|+1)|B|^2=|\varphi||B|^2$.\qedhere
  \end{itemize}
\end{proof}

\end{document}